\newtheorem{theorem}{Theorem}
\newenvironment{proof}[1][Proof]{\noindent\textbf{#1.} }{\ \rule{0.5em}{0.5em}}
\def\@crosshairs{\vbox to0pt{}}
\newcommand{\untaggedfootnote}[1]{\let\thefootnote\relax\footnote{#1}}
\begin{document}

\title{Quantum Bounds for Option Prices}
\author{Paul McCloud \\
Department of Mathematics, University College London}
\maketitle

\begin{abstract}
Option pricing is the most elemental challenge of mathematical finance.
Knowledge of the prices of options at every strike is equivalent to knowing
the entire pricing distribution for a security, as derivatives contingent on
the security can be replicated using options. The available data may be
insufficient to determine this distribution precisely, however, and the
question arises: What are the bounds for the option price at a specified
strike, given the market-implied constraints?

Positivity of the price map imposed by the principle of no-arbitrage is here
utilised, via the Gelfand-Naimark-Segal construction, to transform the
problem into the domain of operator algebras. Optimisation in this larger
context is essentially geometric, and the outcome is simultaneously
super-optimal for all commutative subalgebras.

This generates an upper bound for the price of a basket option. With
innovative decomposition of the assets in the basket, the result is used to
create converging families of price bounds for vanilla options, interpolate
the volatility smile, price options on cross FX rates, and analyse the
relationships between swaption and caplet prices.\bigskip

\textbf{Keywords: }Option pricing; volatility smile; FX options; swaptions
and caplets; no-arbitrage principle; quantum probability; operator algebras;
Gelfand-Naimark-Segal construction.

\untaggedfootnote{Author email: p.mccloud@ucl.ac.uk}

\untaggedfootnote{Available on arXiv: arxiv.org/abs/1712.01385}

\untaggedfootnote{Available on SSRN: ssrn.com/abstract=3082561}
\end{abstract}

\newpage

\section{Introduction}

The incomplete market provides prices for a subset of the full universe of
securities, which constrains, but does not determine, the prices of
securities outside the mark-to-market subspace. In some cases, the available
data imposes model-independent limits on the possible prices for a security
that are sufficiently constrained as to provide useful guidelines for
pricing. These limits are the extremal valuations from the set of
arbitrage-free pricing models that satisfy the market constraints.

In this article, families of bounds for option prices are constructed from
finite-dimensional covariance matrices extracted from the price distribution
of the underlying assets. The approach allows for the arbitrary
decomposition of assets into sub-assets contingent on market events, a
property that is exploited to refine the bounds as more market information
is incorporated. The option price bound is then used to analyse problems
such as options on portfolios, interpolation of the Black-Scholes \cite%
{Black1973} implied volatility smile, the pricing of options on cross rates
in the foreign exchange market, and the relationships between swaption and
caplet prices.

Options on a security are a rich source of information regarding the pricing
measure, as the marginal distribution for the security is fully determined
from the prices of vanilla options. For the underlying security $\mathsf{a}$%
, integration by parts generates an expansion for the derived security $\phi
\lbrack \mathsf{a}]$ in terms of the vanilla put options $(k-\mathsf{a})^{+}$
and call options $(\mathsf{a}-k)^{+}$:%
\begin{align}
\phi \lbrack \mathsf{a}]=\phi \lbrack f]+\phi ^{\prime }[f](\mathsf{a}-f)& +%
\frac{1}{2}\int_{k=-\infty }^{f}\phi ^{\prime \prime }[k](k-\mathsf{a}%
)^{+}\,dk \\
& +\frac{1}{2}\int_{k=f}^{\infty }\phi ^{\prime \prime }[k](\mathsf{a}%
-k)^{+}\,dk  \notag
\end{align}%
Setting $f=\mathbb{E}[\mathsf{a}]$ to be the price of the underlying
security and taking the expectation in the pricing measure leads to the
Carr-Madan replication formula \cite{Carr1998} for the price of the derived
security:%
\begin{align}
\mathbb{E}[\phi \lbrack \mathsf{a}]]=\phi \lbrack f]& +\frac{1}{2}%
\int_{k=-\infty }^{f}\phi ^{\prime \prime }[k]\mathbb{E}[(k-\mathsf{a}%
)^{+}]\,dk \\
& +\frac{1}{2}\int_{k=f}^{\infty }\phi ^{\prime \prime }[k]\mathbb{E}[(%
\mathsf{a}-k)^{+}]\,dk  \notag
\end{align}%
If the price of the derived security is observed in the market, this formula
constrains the prices of vanilla options on the underlying security. The
challenge is to derive the bound for the price of the vanilla option subject
to the constraints imposed by the market prices of a finite collection of
derived securities. This bound should converge to a unique price for the
vanilla option as more market information is included.

The general problem considered here is the determination of bounds for the
price of a basket option:%
\begin{equation}
\mathbb{E}[(\sum_{n}\lambda _{n}\mathsf{a}_{n})^{+}]
\end{equation}%
for the assets $\mathsf{a}_{n}$, defined to be positive securities, and the
quantities $\lambda _{n}$, which may be positive or negative. The main
theoretical result of this article derives bounds for these options from the
matrix of moments:%
\begin{equation}
\mathbb{E}[\sqrt{\mathsf{a}_{m}\mathsf{a}_{n}}]
\end{equation}%
extracted from the price distribution. The diagonal elements of this matrix
are the prices of the assets, and the remaining elements are parametrised by
the volatilities and correlations of the square-roots of the assets. This
furnishes the bound with a convenient and intuitive parametrisation.

The method exploits the Gelfand-Naimark-Segal (GNS) construction \cite%
{Gelfand1943,Segal1947} to transform the problem into one involving operator
algebras, mirroring techniques applied in the study of quantum systems. This
foundational result from the theory of operator algebras is used to generate
an inner product on the securities $\mathsf{a}$ and $\mathsf{b}$:%
\begin{equation}
\braket{\mathsf{a}|\mathsf{b}}=\mathbb{E}[\mathsf{ab}]
\end{equation}%
thereby defining a Hilbert space structure on the securities. Aside from
technical details, the only properties required to validate this
construction are linearity and positivity of the price map, properties that
translate to the concepts of replicability and absence of arbitrage in the
finance application. The securities are represented as operators on the
Hilbert space -- the security $\mathsf{a}$ is represented as the operator $%
\hat{\mathsf{a}}$ whose action on the securities is defined by:%
\begin{equation}
\hat{\mathsf{a}}\ket{\mathsf{b}}=\ket{\mathsf{ab}}
\end{equation}%
Acting via pointwise multiplication, this representation identifies the
securities with the diagonal operators on the Hilbert space.

Optimisation within the wider context of all operators is essentially
geometric, allowing for the simple derivation of bounds for option prices.
The digital functions that indicate exercise in classical probability are
generalised as projections in quantum probability, and the central
theoretical result is the following observation for projections on a Hilbert
space.

\begin{theorem}
For the vectors $\ket{u_n}$ and the scalars $\lambda _{n}$, the supremum of
the valuation:%
\begin{equation}
\sum_{n}\lambda _{n}\braket{u_n|\mathsf{E}|u_n}
\end{equation}%
over all projections $\mathsf{E}$ is given by the sum of the positive
eigenvalues of a finite-dimensional self-adjoint matrix $P$ constructed from
the inner products $\braket{u_m|u_n}$ and the scalars $\lambda _{n}$.
\end{theorem}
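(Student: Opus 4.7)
The plan is to rewrite the objective as a trace, invoke the spectral theorem to identify the supremum, and then translate the resulting eigenvalue problem into finite dimensions via the Gram matrix. First I would observe that
\begin{equation}
\sum_{n}\lambda _{n}\braket{u_n|\mathsf{E}|u_n}=\mathrm{Tr}(\mathsf{E}A),\qquad A=\sum_{n}\lambda _{n}\ket{u_n}\bra{u_n}.
\end{equation}
Since $A$ is self-adjoint and of finite rank, its range lies in the finite-dimensional subspace $V=\mathrm{span}\{\ket{u_n}\}$, and it vanishes on $V^{\perp }$.

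Next, let $A=\sum_{j}\mu _{j}\ket{e_j}\bra{e_j}$ be a spectral decomposition with orthonormal eigenvectors. Then the objective reads $\sum_{j}\mu _{j}\braket{e_j|\mathsf{E}|e_j}$, and because $\braket{e_j|\mathsf{E}|e_j}\in \lbrack 0,1]$ for every projection $\mathsf{E}$, the supremum equals $\sum_{j:\mu _{j}>0}\mu _{j}$, attained by choosing $\mathsf{E}$ to be the orthogonal projection onto the positive eigenspace of $A$. The problem is thus reduced to identifying the positive eigenvalues of $A$ through the stated data.

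To finite-dimensionalise the eigenvalue problem, let $G$ denote the Gram matrix with entries $G_{mn}=\braket{u_m|u_n}$ and $\Lambda $ the diagonal matrix with $\Lambda _{nn}=\lambda _{n}$, and take as candidate
\begin{equation}
P=G^{1/2}\Lambda G^{1/2}.
\end{equation}
This matrix is visibly self-adjoint and depends only on the inner products and the scalars. The key identity is that $BC$ and $CB$ share the same nonzero spectrum: writing $A=U\Lambda U^{\ast }$ where $U$ sends the standard basis vector $e_{n}$ to $\ket{u_n}$, the nonzero eigenvalues of $A=U\Lambda U^{\ast }$ coincide with those of $\Lambda U^{\ast }U=\Lambda G$, which in turn coincide with those of $G^{1/2}\Lambda G^{1/2}=P$.

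The principal technical obstacle is the possible linear dependence of the vectors $\ket{u_n}$, making $G$ singular and $G^{1/2}$ non-invertible. This is handled by restricting to the orthogonal complement of $\ker G$, on which $G^{1/2}$ is invertible and the similarity between $\Lambda G$ and $P$ is genuine, while checking that no nonzero eigenvalue of $A$ is lost and no spurious positive eigenvalue is introduced, since both $A$ and $P$ act as zero on the residual null spaces. Assembling the three steps delivers $\sup_{\mathsf{E}}\sum_{n}\lambda _{n}\braket{u_n|\mathsf{E}|u_n}$ as the sum of positive eigenvalues of the self-adjoint matrix $P$, with the optimal $\mathsf{E}$ explicitly given by the positive spectral projection of $A$.
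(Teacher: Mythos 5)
Your proposal is correct, but it reaches the result by a genuinely different route from the paper. The paper decomposes $\mathsf{H}=\mathsf{H}_{0}\oplus \mathsf{H}_{1}$, studies the compression $\mathsf{E}_{0}$ of the projection to the span of the vectors, diagonalises $\mathsf{E}_{0}$ (whose spectrum lies in $[0,1]$ because of the off-diagonal interference term), reduces the problem to maximising the sum of positive diagonal entries of $\mathsf{P}=\sum_{n}\lambda _{n}\ket{u_n}\bra{u_n}$ over orthonormal bases, and then invokes the Schur--Horn majorisation theorem to bound that by the sum of positive eigenvalues. You instead diagonalise the fixed finite-rank operator $A=\mathsf{P}$ once and for all, and use only the elementary fact that the diagonal entries $\braket{e_j|\mathsf{E}|e_j}=\Vert \mathsf{E}\ket{e_j}\Vert ^{2}$ of a projection lie in $[0,1]$; this gives the upper bound $\sum_{j}\mu _{j}^{+}$ and its attainment by the positive spectral projection in one stroke, with no appeal to Schur--Horn. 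The trade is that the paper varies the basis (eigenvectors of $\mathsf{E}_{0}$) and needs majorisation, while you fix the basis (eigenvectors of $A$) and need nothing beyond the spectral theorem -- a real simplification. For the finite-dimensional reduction, the paper takes the matrix $S_{in}=\braket{z_i|u_n}$ in an orthonormal basis of $\mathsf{H}_{0}$, giving $Q=S^{\ast }S$ and $P=S\Lambda S^{\ast }$ as the matrix of $\mathsf{P}$ in that basis, whereas you choose $S=G^{1/2}$ and pass through the identity that $BC$ and $CB$ share nonzero spectrum to identify the nonzero eigenvalues of $A=U\Lambda U^{\ast }$ with those of $\Lambda G$ and hence of $G^{1/2}\Lambda G^{1/2}$; both yield the same eigenvalue problem, and your worry about singular $G$ is in fact unnecessary since the $BC$/$CB$ identity for nonzero eigenvalues holds without any invertibility assumption. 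The paper's longer route does carry one side benefit: by phrasing the intermediate bound in terms of diagonal entries of $\mathsf{P}$ in an arbitrary basis, it quantifies how suboptimal projections (such as the commuting, left-multiplication ones relevant to the finance application) fall short of the bound, which your argument does not directly expose.
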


\noindent From a practical perspective, the important element of this
theorem is the construction of the matrix $P$, which turns out to be a
simple application of standard matrix methods. The GNS construction
translates this theorem to the following result for options in
arbitrage-free pricing models.

\begin{theorem}
For the assets $\mathsf{a}_{n}$ and the quantities $\lambda _{n}$, the
option valuation:%
\begin{equation}
\mathbb{E}[(\sum_{n}\lambda _{n}\mathsf{a}_{n})^{+}]
\end{equation}%
is bounded above by the sum of the positive eigenvalues of a
finite-dimensional self-adjoint matrix $P$ constructed from the valuations $%
\mathbb{E}[\sqrt{\mathsf{a}_{m}\mathsf{a}_{n}}]$ and the quantities $\lambda
_{n}$.
\end{theorem}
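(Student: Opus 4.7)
The plan is to use the GNS construction to embed the classical pricing problem into the Hilbert-space setting of Theorem 1, after which the upper bound falls out immediately. The key trick is to rewrite the basket pay-off as $(\sum_n \lambda_n \mathsf{a}_n)^+ = \mathsf{e} \cdot \sum_n \lambda_n \mathsf{a}_n$, where $\mathsf{e}$ is the digital indicator of the exercise event $\{\sum_n \lambda_n \mathsf{a}_n > 0\}$. Because $\mathsf{e}$ is a classical indicator, it is idempotent and self-adjoint with respect to the GNS inner product, hence a bona fide projection when realised as a multiplication operator on the GNS Hilbert space.

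First I would apply GNS to the pricing functional $\mathbb{E}$, obtaining a Hilbert space in which each security acts by pointwise multiplication. Since the assets $\mathsf{a}_n$ are positive securities, their square roots $\sqrt{\mathsf{a}_n}$ are well-defined positive securities, and I set $\ket{u_n} = \ket{\sqrt{\mathsf{a}_n}}$. By commutativity and positivity one has $\sqrt{\mathsf{a}_m}\sqrt{\mathsf{a}_n} = \sqrt{\mathsf{a}_m \mathsf{a}_n}$, so the inner products
\begin{equation}
\braket{u_m|u_n} = \mathbb{E}[\sqrt{\mathsf{a}_m}\sqrt{\mathsf{a}_n}] = \mathbb{E}[\sqrt{\mathsf{a}_m \mathsf{a}_n}]
\end{equation}
are precisely the moments named in the statement. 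Moreover, with $\hat{\mathsf{e}}$ denoting the multiplication operator for $\mathsf{e}$,
\begin{equation}
\mathbb{E}[(\textstyle\sum_n \lambda_n \mathsf{a}_n)^+] = \sum_n \lambda_n \mathbb{E}[\mathsf{e}\,\mathsf{a}_n] = \sum_n \lambda_n \mathbb{E}[\sqrt{\mathsf{a}_n}\,\mathsf{e}\,\sqrt{\mathsf{a}_n}] = \sum_n \lambda_n \braket{u_n|\hat{\mathsf{e}}|u_n}.
\end{equation}

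Second, since $\hat{\mathsf{e}}$ is a particular projection on the GNS Hilbert space, the left-hand side is bounded above by the supremum over all projections of the same quadratic expression. Theorem 1 evaluates that supremum as the sum of the positive eigenvalues of the self-adjoint matrix $P$ built from $\braket{u_m|u_n}$ and $\lambda_n$, which is exactly the matrix constructed from the moments $\mathbb{E}[\sqrt{\mathsf{a}_m \mathsf{a}_n}]$ and the weights $\lambda_n$ in the statement. This yields the claimed bound.

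The main delicacy I would expect is operator-theoretic hygiene rather than conceptual difficulty: in full generality the GNS representation may involve unbounded multiplication operators and a non-separating ideal of null securities, so one has to check that the vectors $\ket{\sqrt{\mathsf{a}_n}}$ live in the domain on which the computation above is valid and that $\hat{\mathsf{e}}$ really is a projection in the completed representation. Once those technicalities are dispatched (by a truncation argument or by passing to the bounded functional calculus of the $\mathsf{a}_n$), the logical content is just the compositional slogan \emph{optimising over all projections dominates optimising over classical indicators}, combined with the explicit finite-dimensional answer supplied by Theorem 1.
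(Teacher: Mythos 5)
Your proposal is correct and follows essentially the same route as the paper: apply the GNS construction with $\ket{u_n}=\ket{\sqrt{\mathsf{a}_n}}$ so that $\braket{u_m|u_n}=\mathbb{E}[\sqrt{\mathsf{a}_m\mathsf{a}_n}]$, write the option payoff via the exercise indicator as $\sum_n\lambda_n\braket{u_n|\hat{\mathsf{e}}|u_n}$, and dominate the left-multiplication projection $\hat{\mathsf{e}}$ by the supremum over all projections, which the eigenvalue theorem evaluates as the sum of the positive eigenvalues of $P$. Your added remarks on domain issues and the identity $\mathbb{E}[\mathsf{e}\,\mathsf{a}_n]=\braket{\sqrt{\mathsf{a}_n}|\hat{\mathsf{e}}|\sqrt{\mathsf{a}_n}}$ merely make explicit steps the paper leaves implicit.
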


\noindent Using creative decompositions of the assets, the bound in this
theorem is arbitrarily refined by extracting more information from the
market, generating families of volatility smiles that converge monotonically
to the market-implied smile.

By relying only on linearity and positivity of the map from security to
price, this approach is perfectly adapted to the economic principles of
replicability and the absence of arbitrage, so much so that the original
works by Gelfand and Naimark \cite{Gelfand1943} and Segal \cite{Segal1947}
could be considered as early results in the development of mathematical
finance. These results themselves emerged from the matrix approach to
quantum mechanics pioneered by Born, Jordan and Heisenberg \cite%
{Born1925,Born1926,Heisenberg1925}, through its formalisation in the work of
von Neumann \cite{Neumann1927,Neumann1930} and others, and are now a staple
in the study of operator algebras (see the standard texts \cite%
{Conway1990,Kadison1997,Kadison1997a}).

The precise correspondence with the principle of no-arbitrage encapsulated
in the GNS construction makes operator algebras the natural platform for
mathematical finance. The development is more commonly framed in the
familiar language of classical probability by taking the Arrow-Debreu
securities \cite{Arrow1954} as a basis for the market. While this approach
is largely unquestioned in the domain of mathematical finance, its validity
in the modelling of uncertainty has been the subject of debate in wider
economics circles, as this quote from Keynes \cite{Keynes1937} suggests.

\begin{quotation}
\noindent \emph{By \textquotedblleft uncertain" knowledge, let me explain, I
do not mean merely to distinguish what is known for certain from what is
only probable. \ldots\ About these matters there is no scientific basis on
which to form any calculable probability whatever. We simply do not know.
Nevertheless, the necessity for action and for decision compels us as
practical men to do our best to overlook this awkward fact and to behave
exactly as we should if we had behind us a good Benthamite calculation of a
series of prospective advantages and disadvantages, each multiplied by its
appropriate probability, waiting to be summed.}

John M. Keynes, 1937
\end{quotation}

\noindent As has been observed by economists such as Shackle \cite%
{Shackle1969}, the translation to classical probability is problematic as it
assumes that the range of outcomes indicated by the Arrow-Debreu securities
is known a priori, a requirement strangely at odds with the aims of
probabilistic modelling.

\begin{quotation}
\noindent \emph{We think of uncertainty as more than the existence in the
decision-maker's mind of plural and rival (mutually exclusive) hypotheses
amongst which he has insufficient epistemic grounds of choice. Decision, as
we mean the word, is creative and is able to be so through the freedom which
uncertainty gives for the creation of unpredictable hypotheses. Decision is
not choice amongst the delimited and prescribed moves in a game with fixed
rules and a known list of possible outcomes of any move or sequence of
moves. There is no assurance that any one can in advance say what set of
hypotheses a decision maker will entertain concerning any specified act
available to him. Decision is thought and not merely determinate response.}

George L. S. Shackle, 1969
\end{quotation}

\noindent Shackle rightly observes that \emph{`this language however is not
merely a vessel but a mould'} \cite{Shackle1972} that excludes the
possibility of surprise outcomes, though Shackle's attempts to remedy the
mathematics of classical probability are inadequate.

The solution is quantum probability. In the construction of Gelfand, Naimark
and Segal, the Arrow-Debreu securities manifest as commuting projections. A
fundamental result from the theory of operator algebras states that the
commutative algebra generated by these projections is unitarily isomorphic
to the bounded measurable functions on a measure space \cite%
{Neumann1949,Conway1990,Kadison1997}. In this perspective, the state space
is an emergent property of the market, naturally evolving as more potential
outcomes are uncovered. More important, though, is the corollary that the
algebra of all operators contains commutative subalgebras associated with
every possible configuration of the economy. Optimisations within the full
operator algebra are simultaneously super-optimal for all markets
represented as commutative subalgebras, and the analysis proceeds without
the need to make further assumptions on the nature of the economy.

While the resulting bounds for option prices could be determined using
purely classical methods, the ease with which they are derived using quantum
methods is noteworthy, and suggests further interesting applications. The
approach is liberated from the requirement that the securities form a
commutative algebra, leading to a framework for mathematical finance that
can be applied in noncommutative geometries \cite{McCloud2014,McCloud2017a}
with novel features not available to the classical variant.

\section{Bounds for option prices}

In this article, securities are identified with real-valued functions and
pricing models are identified with real-valued measures on the state space
of the economy. These identifications are based on the following core
assumptions:

\begin{itemize}
\item The security $\mathsf{a}$ is completely determined by specifying its
payoffs $\mathsf{a}[x]$ for each state $x$.

\item The pricing model $\mathsf{z}$ is completely determined by specifying
its prices $\mathsf{z}[X]$ of the Arrow-Debreu securities for each subset of
states $X$.
\end{itemize}

\noindent Appealing to the principle of replicability, the price of the
security $\mathsf{a}$ in the pricing model $\mathsf{z}$ is given by the
integral:%
\begin{equation}
\mathbb{E}[\mathsf{a}]=\int_{x}\mathsf{z}[dx]\,\mathsf{a}[x]
\end{equation}%
Prohibiting arbitrage then requires that the pricing measure is positive, so
that a security whose payoff is positive in all states of the economy has
positive price.

\subsection{The Gelfand-Naimark-Segal construction}

Positivity of the pricing model associated with a finite positive measure
enables the Gelfand-Naimark-Segal, or GNS, construction on the space $%
\mathsf{V}$ of securities. The content of the GNS construction is captured
in the statement that, for an arbitrage-free pricing model, the definition:%
\begin{equation}
\braket{\mathsf{a}|\mathsf{b}}=\mathbb{E}[\mathsf{a}^{\ast }\mathsf{b}]
\end{equation}%
provides an inner product on the securities $\mathsf{a},\mathsf{b}\in 
\mathsf{V}$. The security $\mathsf{a}\in \mathsf{V}$ is then represented as
a diagonal operator via pointwise-multiplication:%
\begin{equation}
\hat{\mathsf{a}}:\ket{\mathsf{b}}\in \mathsf{V}\mapsto \ket{\mathsf{ab}}\in 
\mathsf{V}
\end{equation}%
The apparent simplicity of this definition belies the technical challenges
of the construction, which needs to exclude securities with infinite prices
and factor out the degeneracy arising from securities whose payoffs are zero
almost everywhere. Standard results from the theory of operator algebras are
outlined here for completeness. The detail is not required for an
understanding of the finance applications that follow.

The foundational result is the Cauchy-Schwarz inequality \cite%
{Cauchy1821,Bouniakowsky1859,Schwarz1890} that positivity implies for
pricing.

\begin{theorem}[Cauchy-Schwarz inequality]
The arbitrage-free pricing model $\mathbb{E}$ satisfies the inequality:%
\begin{equation}
\left\vert \mathbb{E}[\mathsf{a}^{\ast }\mathsf{b}]\right\vert ^{2}\leq 
\mathbb{E}[\mathsf{a}^{\ast }\mathsf{a}]\mathbb{E}[\mathsf{b}^{\ast }\mathsf{%
b}]
\end{equation}%
for the securities $\mathsf{a},\mathsf{b}\in \mathsf{V}$.
\end{theorem}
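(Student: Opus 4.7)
The plan is to reduce the inequality to a single application of positivity of the price map, via the standard optimisation trick for Cauchy–Schwarz. Since $\mathsf{c}^{\ast}\mathsf{c}$ has a pointwise-nonnegative payoff for any $\mathsf{c}\in\mathsf{V}$, the no-arbitrage principle gives $\mathbb{E}[\mathsf{c}^{\ast}\mathsf{c}]\geq 0$. I would apply this with $\mathsf{c}=\mathsf{a}-\lambda\mathsf{b}$ for an arbitrary scalar $\lambda$, and use linearity of $\mathbb{E}$ together with the hermitian relation $\mathbb{E}[\mathsf{b}^{\ast}\mathsf{a}]=\overline{\mathbb{E}[\mathsf{a}^{\ast}\mathsf{b}]}$ (itself a consequence of positivity on real payoffs) to expand:
\begin{equation}
\mathbb{E}[\mathsf{a}^{\ast}\mathsf{a}]-\overline{\lambda}\,\mathbb{E}[\mathsf{b}^{\ast}\mathsf{a}]-\lambda\,\mathbb{E}[\mathsf{a}^{\ast}\mathsf{b}]+|\lambda|^{2}\mathbb{E}[\mathsf{b}^{\ast}\mathsf{b}]\geq 0.
\end{equation}

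The next step is to pick $\lambda$ so as to extract the sharp inequality. Assuming $\mathbb{E}[\mathsf{b}^{\ast}\mathsf{b}]>0$, the choice $\lambda=\mathbb{E}[\mathsf{b}^{\ast}\mathsf{a}]/\mathbb{E}[\mathsf{b}^{\ast}\mathsf{b}]$ minimises the quadratic in $\lambda$; substituting back collapses three of the four terms and yields
\begin{equation}
\mathbb{E}[\mathsf{a}^{\ast}\mathsf{a}]\,\mathbb{E}[\mathsf{b}^{\ast}\mathsf{b}]\geq |\mathbb{E}[\mathsf{a}^{\ast}\mathsf{b}]|^{2},
\end{equation}
which is the required bound.

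The only delicate point, and the one I expect to be the main obstacle, is the degenerate case $\mathbb{E}[\mathsf{b}^{\ast}\mathsf{b}]=0$, where the optimising $\lambda$ above is undefined. Here I would instead take $\lambda=t\,\mathbb{E}[\mathsf{a}^{\ast}\mathsf{b}]$ with $t\in\mathbb{R}$; the quadratic term vanishes and positivity forces $\mathbb{E}[\mathsf{a}^{\ast}\mathsf{a}]-2t\,|\mathbb{E}[\mathsf{a}^{\ast}\mathsf{b}]|^{2}\geq 0$ for every real $t$, which is only possible if $\mathbb{E}[\mathsf{a}^{\ast}\mathsf{b}]=0$, and the inequality then holds trivially. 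No further input is needed: the argument rests solely on linearity and positivity of $\mathbb{E}$, which are exactly the structural properties built into the setup.
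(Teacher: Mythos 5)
Your proposal is correct, and it is the standard quadratic-in-$\lambda$ argument for Cauchy--Schwarz, resting exactly on the two structural properties (linearity and positivity of the price map) that the paper isolates as the content of no-arbitrage. The paper itself does not supply a proof of this theorem: it simply cites the inequality as a well-known foundational result of the theory of operator algebras and moves directly to the construction of the quotient norm, so there is no in-paper argument to compare yours against.

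One small technical point, which is immaterial in the paper's setting but worth flagging for precision. In the degenerate case $\mathbb{E}[\mathsf{b}^{\ast}\mathsf{b}]=0$ you substitute $\lambda=t\,\mathbb{E}[\mathsf{a}^{\ast}\mathsf{b}]$ and claim the linear terms collapse to $-2t\,|\mathbb{E}[\mathsf{a}^{\ast}\mathsf{b}]|^{2}$. Writing $C=\mathbb{E}[\mathsf{a}^{\ast}\mathsf{b}]$, the linear part of the expansion is $-\bar{\lambda}\bar{C}-\lambda C$, which with $\lambda=tC$ gives $-t(\bar{C}^{2}+C^{2})$, not $-2t|C|^{2}$ unless $C$ is real. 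The fix in the general $\ast$-algebra setting is to take $\lambda=t\,\overline{\mathbb{E}[\mathsf{a}^{\ast}\mathsf{b}]}=t\,\mathbb{E}[\mathsf{b}^{\ast}\mathsf{a}]$, which yields $-2t|C|^{2}$ as intended. In this paper securities and pricing measures are real-valued, so $C$ is real and your substitution is in fact correct as written; just be aware that the complex version of the trick requires the conjugate.
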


\noindent Define the following subspaces of securities:%
\begin{align}
\mathsf{N}_{2}& =\{\mathsf{a}\in \mathsf{V}:\left\Vert \mathsf{a}\right\Vert
_{2}=0\} \\
\mathsf{V}_{2}& =\{\mathsf{a}\in \mathsf{V}:\left\Vert \mathsf{a}\right\Vert
_{2}<\infty \}  \notag
\end{align}%
where:%
\begin{equation}
\left\Vert \mathsf{a}\right\Vert _{2}=\sqrt{\mathbb{E}[\mathsf{a}^{\ast }%
\mathsf{a}]}
\end{equation}%
for the security $\mathsf{a}\in \mathsf{V}$. The first subspace includes the
securities that are zero almost everywhere, and the second subspace includes
the securities that are square-integrable, relative to the measure. The
pricing model is used to construct an inner product on the quotient space $%
\mathsf{V}_{2}/\mathsf{N}_{2}$. Denote by $\ket{\mathsf{a}}\equiv \mathsf{a}+%
\mathsf{N}_{2}$ the coset containing the security $\mathsf{a}\in \mathsf{V}%
_{2}$. The inner product of the two cosets $\ket{\mathsf{a}},\ket{\mathsf{b}}%
\in \mathsf{V}_{2}/\mathsf{N}_{2}$ is defined by:%
\begin{equation}
\braket{\mathsf{a}|\mathsf{b}}=\mathbb{E}[\mathsf{a}^{\ast }\mathsf{b}]
\end{equation}%
Repeated application of the Cauchy-Schwarz inequality demonstrates that this
is a well-defined inner product on the quotient space.

The topological completion of the quotient space is the Hilbert space:%
\begin{equation}
\mathsf{H}=\overline{\mathsf{V}_{2}/\mathsf{N}_{2}}
\end{equation}%
Define the following subspace of securities:%
\begin{equation}
\mathsf{V}_{\infty }=\{\mathsf{a}\in \mathsf{V}:\left\Vert \mathsf{a}%
\right\Vert _{\infty }<\infty \}
\end{equation}%
where:%
\begin{equation}
\left\Vert \mathsf{a}\right\Vert _{\infty }=\sup \{\sqrt{\mathbb{E}[\mathsf{b%
}^*\mathsf{a}^*\mathsf{a}\mathsf{b}]/\mathbb{E}[\mathsf{b}^*\mathsf{b}]}:%
\mathsf{b}\in \mathsf{V}_{2}\backslash \mathsf{N}_{2}\}
\end{equation}%
for the security $\mathsf{a}\in \mathsf{V}$. This subspace is closed under
the product, forming a subalgebra of the securities, and the GNS
construction represents the subalgebra as an algebra of bounded operators on
the Hilbert space.

\begin{theorem}[Gelfand-Naimark-Segal construction]
For the arbitrage-free pricing model $\mathbb{E}$, there is a representation:%
\begin{equation}
\mathsf{a}\in \mathsf{V}_{\infty }\mapsto \hat{\mathsf{a}}\in \mathcal{B}[%
\mathsf{H}]
\end{equation}%
of the securities as bounded operators on the Hilbert space $\mathsf{H}=%
\overline{\mathsf{V}_{2}/\mathsf{N}_{2}}$, such that the pricing model is a
pure state of the representation:%
\begin{equation}
\mathbb{E}[\mathsf{a}]=\braket{1|\hat{\mathsf{a}}|1}
\end{equation}%
for the security $\mathsf{a}\in \mathsf{V}_{\infty }$.
\end{theorem}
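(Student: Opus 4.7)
The plan is to define $\hat{\mathsf{a}}$ by pointwise multiplication on the dense subspace $\mathsf{V}_{2}/\mathsf{N}_{2} \subset \mathsf{H}$ and then extend by continuity to a bounded operator on the full completion. The central estimate, valid for $\mathsf{a} \in \mathsf{V}_{\infty}$ and $\mathsf{b} \in \mathsf{V}_{2}$, is
\begin{equation}
\mathbb{E}[\mathsf{b}^{\ast}\mathsf{a}^{\ast}\mathsf{a}\mathsf{b}] \leq \left\Vert \mathsf{a} \right\Vert _{\infty}^{2}\,\mathbb{E}[\mathsf{b}^{\ast}\mathsf{b}],
\end{equation}
which is an immediate consequence of the definition of $\left\Vert \cdot \right\Vert _{\infty}$. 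This one inequality does three jobs at once: it shows that $\mathsf{ab} \in \mathsf{V}_{2}$ whenever $\mathsf{a} \in \mathsf{V}_{\infty}$ and $\mathsf{b} \in \mathsf{V}_{2}$; it shows that $\mathsf{ab} \in \mathsf{N}_{2}$ whenever $\mathsf{b} \in \mathsf{N}_{2}$, so the prescription $\hat{\mathsf{a}}\ket{\mathsf{b}} = \ket{\mathsf{ab}}$ descends unambiguously to the quotient; and it bounds the operator $\hat{\mathsf{a}}$ on $\mathsf{V}_{2}/\mathsf{N}_{2}$ with operator norm at most $\left\Vert \mathsf{a} \right\Vert _{\infty}$. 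The standard extension theorem for bounded linear maps on a dense subspace then produces a unique $\hat{\mathsf{a}} \in \mathcal{B}[\mathsf{H}]$.

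Once the representation exists, I would verify the algebraic properties. Linearity in $\mathsf{a}$ and the multiplicativity $\widehat{\mathsf{a}_{1}\mathsf{a}_{2}} = \hat{\mathsf{a}}_{1}\hat{\mathsf{a}}_{2}$ follow from the corresponding identities for pointwise multiplication and extend by continuity from the dense subspace. The $\ast$-property $\hat{\mathsf{a}}^{\ast} = \widehat{\mathsf{a}^{\ast}}$ reduces on the dense subspace to
\begin{equation}
\braket{\mathsf{c}|\hat{\mathsf{a}}|\mathsf{b}} = \mathbb{E}[\mathsf{c}^{\ast}\mathsf{a}\mathsf{b}] = \mathbb{E}[(\mathsf{a}^{\ast}\mathsf{c})^{\ast}\mathsf{b}] = \braket{\widehat{\mathsf{a}^{\ast}}\mathsf{c}|\mathsf{b}},
\end{equation}
confirming that $\mathsf{V}_{\infty} \to \mathcal{B}[\mathsf{H}]$ is a $\ast$-algebra homomorphism.

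The pure state identity is then almost immediate. Because the pricing model is finite, the constant security $1$ lies in $\mathsf{V}_{2}$ and defines a vector $\ket{1} \in \mathsf{H}$. By construction $\hat{\mathsf{a}}\ket{1} = \ket{\mathsf{a}}$, so
\begin{equation}
\braket{1|\hat{\mathsf{a}}|1} = \braket{1|\mathsf{a}} = \mathbb{E}[\mathsf{a}],
\end{equation}
which is the required state formula.

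The main obstacle is not any individual calculation but rather the careful handling of the quotient followed by the continuous extension. One has to check that left multiplication by $\mathsf{V}_{\infty}$ preserves both $\mathsf{V}_{2}$ and the null subspace $\mathsf{N}_{2}$ before $\hat{\mathsf{a}}$ can be regarded as an operator on $\mathsf{V}_{2}/\mathsf{N}_{2}$, and that $\mathsf{V}_{\infty}$ is itself closed under products so that the image lands in $\mathcal{B}[\mathsf{H}]$. The norm $\left\Vert \cdot \right\Vert _{\infty}$ has been defined precisely so that the key bound above holds trivially, which is what renders all of these verifications routine once the definitions are in place.
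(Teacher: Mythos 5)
Your proposal follows the paper's approach exactly: define $\hat{\mathsf{a}}$ by left-multiplication on the dense quotient $\mathsf{V}_{2}/\mathsf{N}_{2}$, use the defining inequality for $\left\Vert\cdot\right\Vert_{\infty}$ to get boundedness, and extend by continuity to $\mathcal{B}[\mathsf{H}]$, with the state formula following from $\hat{\mathsf{a}}\ket{1}=\ket{\mathsf{a}}$. The paper's text is just a terse sketch, and your proposal fills in the same verifications (well-definedness on the quotient, the $\ast$-homomorphism property) that the paper treats as routine.
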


\noindent The representation in this construction is first defined on the
dense subspace $\mathsf{V}_{2}/\mathsf{N}_{2}\subset \mathsf{H}$ via
left-multiplication:%
\begin{equation}
\hat{\mathsf{a}}\ket{\mathsf{b}}=\ket{\mathsf{ab}}
\end{equation}%
for the securities $\mathsf{a}\in \mathsf{V}_{\infty }$ and $\mathsf{b}\in 
\mathsf{V}_{2}$, and extended to $\mathsf{H}$ by continuity, where
finiteness of the norm $\left\Vert \mathsf{a}\right\Vert _{\infty }$ ensures
that this extension is possible.

Heuristically, the left-multiplication operators are the diagonal operators
with respect to the basis of Dirac delta functions. This identification is
strictly valid only when the state space is discrete, but the analogy can be
a useful aid to understanding. The security is thus identified with a
diagonal operator on a Hilbert space, with the price of the security given
by the vacuum expectation of the operator.

By considering optimisation problems within the expanded domain of all
operators on the Hilbert space, it is possible to determine solutions that
are super-optimal for the restricted application. This can be used to derive
bounds for option prices.

\subsection{Super-optimal exercise strategies}

For the assets $\mathsf{a}_{n}$, defined to be positive securities, and the
quantities $\lambda _{n}$, which may be positive or negative scalars,
consider the option to receive the portfolio $\sum_{n}\lambda _{n}\mathsf{a}%
_{n}$. Exercise of the option is indicated by the Arrow-Debreu security $%
\mathsf{e}$, restricted so that it only takes the values zero or one, $\func{%
spec}[\mathsf{e}]\subset \{0,1\}$. The price of the option is then:%
\begin{equation}
p[\mathsf{e}]=\mathbb{E}[(\sum_{n}\lambda _{n}\mathsf{a}_{n})\mathsf{e}]
\end{equation}%
Optimal exercise happens when the option price is maximised over all
possible exercise strategies. In this case, optimal exercise corresponds to
the indicator $\mathsf{e}=(\sum_{n}\lambda _{n}\mathsf{a}_{n}\geq 0)$, with
option price:%
\begin{equation}
p=\mathbb{E}[(\sum_{n}\lambda _{n}\mathsf{a}_{n})^{+}]
\end{equation}%
The option price is obtained as the supremum price over a range of
securities, each identified by its exercise strategy. Without additional
information regarding the measure, it is not possible to refine this
statement. It is possible, however, to obtain a super-optimal price for the
option that requires only partial information from the pricing model.

Using the GNS construction associated with the pricing model, the option
price is expressed as:%
\begin{equation}
p[\mathsf{e}]=\sum_{n}\lambda _{n}\braket{\sqrt{\mathsf{a}_n}|\hat{%
\mathsf{e}}|\sqrt{\mathsf{a}_n}}
\end{equation}%
The optimal option price is the supremum of this expression over projections 
$\hat{\mathsf{e}}$ in the subalgebra of \emph{left-multiplication}
operators. This is bounded above by the supremum of the expression:%
\begin{equation}
p[\mathsf{E}]=\sum_{n}\lambda _{n}\braket{\sqrt{\mathsf{a}_n}|\mathsf{E}|%
\sqrt{\mathsf{a}_n}}
\end{equation}%
over projections $\mathsf{E}$ in the algebra of \emph{all} operators. The
beauty of this observation is that the evaluation of the supremum over all
projections is essentially geometric, requiring optimisation only over the
projections on the finite-dimensional subspace spanned by the cosets $%
\ket{\sqrt{\mathsf{a}_n}}$ associated with the square-roots of the assets.

\subsection{Eigenvalue solution for the supremum}

Motivated by the preceding argument, consider the following problem on a
Hilbert space $\mathsf{H}$: Given the vectors $\ket{u_n}$ and the scalars $%
\lambda _{n}$, determine the supremum of the valuations $\sum_{n}\lambda _{n}%
\braket{u_n|\mathsf{E}|u_n}$ over all projections $\mathsf{E}$. This
supremum is determined in the following theorem.

\begin{theorem}
Let $\mathsf{H}$ be a Hilbert space. For the vectors $\ket{u_n}\in \mathsf{H}
$ and the scalars $\lambda _{n}\in \mathbb{R}$, define the
finite-dimensional matrices $Q$ and $\Lambda $ with matrix elements:%
\begin{align}
Q_{mn}& =\braket{u_m|u_n} \\
\Lambda _{mn}& =\lambda _{n}\delta _{mn}  \notag
\end{align}%
For a decomposition $Q=S^{\ast }S$ of the positive semi-definite matrix $Q$
in terms of a matrix $S$, define the self-adjoint matrix $P=S\Lambda S^{\ast
}$. Then the supremum of the valuation:%
\begin{equation}
\sum_{n}\lambda _{n}\braket{u_n|\mathsf{E}|u_n}
\end{equation}%
over all projections $\mathsf{E}\in \mathcal{B}[\mathsf{H}]$ is given by the
sum of the positive eigenvalues of the matrix $P$.
\end{theorem}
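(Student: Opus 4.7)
The plan is to turn the optimisation into a finite-dimensional spectral problem. First I would consolidate the vectors into a single operator $U : \mathbb{C}^N \to \mathsf{H}$ defined by $U\ket{n} = \ket{u_n}$, where $\ket{n}$ is the standard basis, so that $U^{\ast} U = Q$. Using cyclicity of the trace, the valuation becomes
\begin{equation}
\sum_n \lambda_n \braket{u_n | \mathsf{E} | u_n} = \operatorname{tr}(\Lambda U^{\ast} \mathsf{E} U) = \operatorname{tr}(A \mathsf{E}),
\end{equation}
where $A = U \Lambda U^{\ast}$ is a self-adjoint, finite-rank operator on $\mathsf{H}$. The task reduces to computing $\sup_{\mathsf{E}} \operatorname{tr}(A \mathsf{E})$ over projections $\mathsf{E} \in \mathcal{B}[\mathsf{H}]$.

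Second, I would solve this trace maximisation by diagonalising $A$. Since $A$ is self-adjoint of finite rank, it has real eigenvalues $\mu_i$ with mutually orthonormal eigenvectors $\ket{v_i} \in \mathsf{H}$, and the spectral expansion gives
\begin{equation}
\operatorname{tr}(A \mathsf{E}) = \sum_i \mu_i \braket{v_i | \mathsf{E} | v_i},
\end{equation}
where each coefficient $\braket{v_i | \mathsf{E} | v_i} = \left\Vert \mathsf{E} \ket{v_i} \right\Vert^2$ lies in $[0,1]$. The right-hand side is therefore bounded above by $\sum_{\mu_i > 0} \mu_i$, and this value is attained by taking $\mathsf{E}$ to be the orthogonal projection onto the span of the eigenvectors $\ket{v_i}$ with $\mu_i > 0$.

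Third, I would identify this maximum with the sum of positive eigenvalues of $P$. The key observation is the standard fact that, for bounded operators $X$ and $Y$ between Hilbert spaces, $XY$ and $YX$ share the same nonzero spectrum with the same algebraic multiplicities. Applied with $X = U$ and $Y = \Lambda U^{\ast}$, the nonzero eigenvalues of $A = U \Lambda U^{\ast}$ coincide with those of $\Lambda U^{\ast} U = \Lambda Q$. The same argument with $X = S$ and $Y = \Lambda S^{\ast}$ shows that the nonzero eigenvalues of $P = S \Lambda S^{\ast}$ also coincide with those of $\Lambda Q$. Hence $A$ and $P$ have matching positive eigenvalues with multiplicities, so their positive-eigenvalue sums agree; as a bonus, this confirms that the bound is independent of the chosen factorisation $Q = S^{\ast} S$. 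The main obstacle is precisely this spectral identification — the rest is routine linear algebra — but it is dispatched by the standard $XY$/$YX$ lemma applied twice.
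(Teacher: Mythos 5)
Your proof is correct, and it takes a genuinely different route from the paper's. The paper decomposes $\mathsf{H} = \mathsf{H}_0 \oplus \mathsf{H}_1$, writes the projection in block form, observes that the compression $\mathsf{E}_0$ has spectrum in $[0,1]$, diagonalises $\mathsf{E}_0$, and then invokes the Schur--Horn theorem to bound the sum of positive diagonal entries of $\mathsf{P} = \sum_n \lambda_n \ket{u_n}\bra{u_n}$ in the resulting basis by the sum of its positive eigenvalues; finally it identifies the eigenvalues of $\mathsf{P}$ with those of $P$ through the explicit coordinate relations $P = S\Lambda S^*$, $Q = S^*S$ with $S_{in} = \braket{z_i|u_n}$. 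You instead diagonalise the operator $A = U\Lambda U^*$ (which is exactly the paper's $\mathsf{P}$) directly, so that the valuation becomes $\sum_i \mu_i \braket{v_i|\mathsf{E}|v_i}$ with coefficients in $[0,1]$, and the bound falls out immediately — no Schur--Horn needed, because you have already picked the eigenbasis of the operator being compressed rather than the eigenbasis of the projection. Your identification of the eigenvalues of $A$ with those of $P$ via the $XY$/$YX$ spectral lemma is also cleaner than the paper's coordinate computation, and it gives for free the invariance of the answer under the choice of factorisation $Q = S^*S$, which the paper asserts separately by appeal to unitary invariance. In short, you reach the same destination with less machinery: the paper's Schur--Horn step is an overhead that your choice of diagonalisation renders unnecessary, while your trace-cyclicity reformulation and the $XY$/$YX$ argument make the structure of the result (a finite-rank spectral problem, invariant under the factorisation) more transparent.
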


\begin{proof}
The aim is to determine the supremum:%
\begin{align}
p=\sup \{& \sum_{n}\lambda _{n}\braket{u_n|\mathsf{E}|u_n}: \\
& \mathsf{E}\in \mathcal{B}[\mathsf{H}],\mathsf{E}^{\ast }=\mathsf{E},\func{%
spec}[\mathsf{E}]\subset \{0,1\}\}  \notag
\end{align}%
The problem is simplified by decomposing the Hilbert space, $\mathsf{H}=%
\mathsf{H}_{0}\oplus \mathsf{H}_{1}$, where $\mathsf{H}_{0}$ is the
finite-dimensional Hilbert space spanned by the vectors and $\mathsf{H}_{1}$
is its orthogonal complement in $\mathsf{H}$. The valuation depends only on
the restriction of the projection to the subspace:%
\begin{equation}
\sum_{n}\lambda _{n}\braket{u_n|\mathsf{E}|u_n}=\sum_{n}\lambda _{n}%
\braket{u_n|\mathsf{E}_0|u_n}
\end{equation}%
where the projection is decomposed relative to the decomposition of the
Hilbert space:%
\begin{equation}
\mathsf{E}=%
\begin{bmatrix}
\mathsf{E}_{0} & \mathsf{F} \\ 
\mathsf{F}^{\ast } & \mathsf{E}_{1}%
\end{bmatrix}%
\end{equation}%
for the operators $\mathsf{E}_{0}\in \mathcal{B}[\mathsf{H}_{0}]$, $\mathsf{E%
}_{1}\in \mathcal{B}[\mathsf{H}_{1}]$ and $\mathsf{F}\in \mathcal{B}[\mathsf{%
H}_{1},\mathsf{H}_{0}]$. The upper-left operator $\mathsf{E}_{0}$ is
self-adjoint, but it is not necessarily a projection. Instead, the
projection condition $\mathsf{E}^{2}=\mathsf{E}$ translates to the property:%
\begin{equation}
\mathsf{E}_{0}(1-\mathsf{E}_{0})=\mathsf{FF}^{\ast }
\end{equation}%
Interference from the off-diagonal operator $\mathsf{F}$ prevents $\mathsf{E}%
_{0}$ from being a projection. The operator $\mathsf{FF}^{\ast }$ is
positive semi-definite, so the projection property implies that $\func{spec}[%
\mathsf{E}_{0}]\subset \lbrack 0,1]$ with interference creating the
possibility of eigenvalues between zero and one.

The restricted projection $\mathsf{E}_{0}\in \mathcal{B}[\mathsf{H}_{0}]$ is
diagonalised as:%
\begin{equation}
\mathsf{E}_{0}=\sum_{i}\omega _{i}\ket{z_i}\bra{z_i}
\end{equation}%
where $\ket{z_i}\in \mathsf{H}_{0}$ are diagonalising orthonormal basis
eigenvectors and the eigenvalues $\omega _{i}\in \mathbb{R}$ satisfy $0\leq
\omega _{i}\leq 1$. Using this diagonalisation, the valuation becomes:%
\begin{equation}
\sum_{n}\lambda _{n}\braket{u_n|\mathsf{E}_0|u_n}=\sum_{i}\omega _{i}%
\braket{z_i|\mathsf{P}|z_i}
\end{equation}%
where the self-adjoint operator $\mathsf{P}\in \mathcal{B}[\mathsf{H}_{0}]$
is defined by:%
\begin{equation}
\mathsf{P}=\sum_{n}\lambda _{n}\ket{u_n}\bra{u_n}
\end{equation}%
Among the restricted projections that share the eigenvectors $\ket{z_i}$,
the maximum valuation is obtained by using the projection onto the subspace
of $\mathsf{H}_{0}$ spanned by the eigenvectors for which the diagonal
element $\braket{z_i|\mathsf{P}|z_i}$ is positive. The expression for the
supremum is then:%
\begin{align}
p=\sup \{& \sum_{i}\braket{z_i|\mathsf{P}|z_i}^{+}: \\
& \ket{z_i}\in \mathsf{H}_{0}\text{ orthonormal basis}\}  \notag
\end{align}

The valuation $\sum_{i}\braket{z_i|\mathsf{P}|z_i}^{+}$ is the sum of the
positive diagonal elements of the operator $\mathsf{P}$. A straightforward
appeal to the Schur-Horn theorem \cite{Schur1923,Horn1954} demonstrates that
this is bounded above by the sum of the positive eigenvalues of $\mathsf{P}$%
. To see this, first assume without loss of generality that the diagonal
elements $\braket{z_i|\mathsf{P}|z_i}$ and the eigenvalues $p_{i}$ of $%
\mathsf{P}$ are arranged in non-increasing order. The Schur-Horn theorem
states that:%
\begin{equation}
\sum_{i=1}^{j}\braket{z_i|\mathsf{P}|z_i}\leq \sum_{i=1}^{j}p_{i}
\end{equation}%
for all $j$. Taking the maximum over $j$, first on the right and then on the
left, shows that:%
\begin{equation}
\max_{j}[\sum_{i=1}^{j}\braket{z_i|\mathsf{P}|z_i}]\leq
\max_{j}[\sum_{i=1}^{j}p_{i}]
\end{equation}%
The required result then follows from the observation that the maxima in
this expression are given by the sum of the positive elements in their
respective sequences, so that:%
\begin{equation}
\sum_{i}\braket{z_i|\mathsf{P}|z_i}^{+}\leq \sum_{i}p_{i}^{+}
\end{equation}

The sum of the positive eigenvalues of $\mathsf{P}$ bounds the sum of the
positive diagonal elements of $\mathsf{P}$, and so provides an upper bound
for the supremum. This bound is attained by using the projection onto the
subspace of $\mathsf{H}_{0}$ spanned by the eigenvectors of $\mathsf{P}$
with positive eigenvalues. The supremum of the valuations is then finally
identified with the sum of the positive eigenvalues of $\mathsf{P}$:%
\begin{equation}
p=\sum_{i}p_{i}^{+}
\end{equation}%
The supremum is thus related to the solution of a finite-dimensional
eigenvalue problem, and is obtained as the sum of the positive roots of a
polynomial of order matching the dimension of the subspace spanned by the
vectors.

The computation of the eigenvalues is enabled by expressing the problem in
terms of an orthonormal basis $\ket{z_i}\in \mathsf{H}_{0}$ for the
subspace. The algorithm seeks to construct the matrix $P=[P_{ij}]$ from the
input matrix $Q=[Q_{mn}]$, where the matrix elements are:%
\begin{align}
P_{ij}& =\braket{z_i|\mathsf{P}|z_j} \\
Q_{mn}& =\braket{u_m|u_n}  \notag
\end{align}%
The solution requires the matrices $\Lambda =[\Lambda _{mn}]$ and $%
S=[S_{in}] $ with matrix elements:%
\begin{align}
\Lambda _{mn}& =\lambda _{n}\delta _{mn} \\
S_{in}& =\braket{z_i|u_n}  \notag
\end{align}%
The essential relationships among these matrices are:%
\begin{align}
P& =S\Lambda S^{\ast } \\
Q& =S^{\ast }S  \notag
\end{align}%
The program for solving the eigenvalue problem is now clear: First decompose
the positive semi-definite matrix $Q$ in the form $S^{\ast }S$, then solve
for the eigenvalues of the self-adjoint matrix $P=S\Lambda S^{\ast }$. Any
such decomposition for the matrix $Q$ generates the same result, as the
eigenvalue problem is unaffected by unitary transformations. The solution
thus depends only on the scalars $\lambda _{n}$ and the inner products $%
\braket{u_m|u_n}$, and the dimension of the eigenvalue problem is the rank
of the matrix with elements given by these inner products.
\end{proof}

\section{Applications of the option price bound}

The GNS\ construction determines an inner product on the securities, and
this relates the result of the previous section to the prices of options.

\begin{theorem}
For the arbitrage-free pricing model $\mathbb{E}$, the assets $\mathsf{a}%
_{n} $ and the quantities $\lambda _{n}$, define the finite-dimensional
matrices $Q$ and $\Lambda $ with matrix elements:%
\begin{align}
Q_{mn}& =\mathbb{E}[\sqrt{\mathsf{a}_{m}\mathsf{a}_{n}}] \\
\Lambda _{mn}& =\lambda _{n}\delta _{mn}  \notag
\end{align}%
For a decomposition $Q=S^{\ast }S$ of the positive semi-definite matrix $Q$
in terms of a matrix $S$, define the self-adjoint matrix $P=S\Lambda S^{\ast
}$. Then the option valuation:%
\begin{equation}
\mathbb{E}[(\sum_{n}\lambda _{n}\mathsf{a}_{n})^{+}]
\end{equation}%
is bounded above by the sum of the positive eigenvalues of the matrix $P$.
\end{theorem}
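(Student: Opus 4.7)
My plan is to reduce this statement directly to the preceding eigenvalue theorem by invoking the GNS construction associated with the arbitrage-free pricing model. The bridge between the two theorems is that the option's optimal exercise indicator, multiplied pointwise, becomes a projection in the left-multiplication subalgebra of $\mathcal{B}[\mathsf{H}]$, and enlarging the class of candidate projections to all of $\mathcal{B}[\mathsf{H}]$ only increases the supremum.

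First, I would fix notation by letting $\mathsf{e} = (\sum_n \lambda_n \mathsf{a}_n \geq 0)$ denote the optimal exercise Arrow-Debreu security, so that the basket payoff may be written as $(\sum_n \lambda_n \mathsf{a}_n)^+ = (\sum_n \lambda_n \mathsf{a}_n)\mathsf{e}$. Since each $\mathsf{a}_n$ is a positive security, its pointwise square root $\sqrt{\mathsf{a}_n}$ exists and satisfies $\mathsf{a}_n = \sqrt{\mathsf{a}_n}\sqrt{\mathsf{a}_n}$. Using the GNS representation, in which the pricing functional is the vacuum expectation and multiplication acts diagonally on $\mathsf{H} = \overline{\mathsf{V}_2/\mathsf{N}_2}$, I would express the option price as
\begin{equation}
\mathbb{E}[(\sum_n \lambda_n \mathsf{a}_n)^+] = \sum_n \lambda_n \mathbb{E}[\sqrt{\mathsf{a}_n}\,\mathsf{e}\,\sqrt{\mathsf{a}_n}] = \sum_n \lambda_n \braket{\sqrt{\mathsf{a}_n}|\hat{\mathsf{e}}|\sqrt{\mathsf{a}_n}}
\end{equation}
Because $\mathsf{e}$ takes only the values zero and one, the operator $\hat{\mathsf{e}}$ is a self-adjoint idempotent on $\mathsf{H}$, hence a projection.

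Next, I would relax the optimisation by observing that the exact option value above is at most the supremum of the same valuation taken over arbitrary projections $\mathsf{E} \in \mathcal{B}[\mathsf{H}]$, rather than only those of the form $\hat{\mathsf{e}}$ arising from left multiplication. Applying the preceding Hilbert-space theorem with $\ket{u_n} = \ket{\sqrt{\mathsf{a}_n}}$, the relevant Gram matrix elements become
\begin{equation}
\braket{u_m|u_n} = \mathbb{E}[\sqrt{\mathsf{a}_m}\sqrt{\mathsf{a}_n}] = \mathbb{E}[\sqrt{\mathsf{a}_m \mathsf{a}_n}] = Q_{mn}
\end{equation}
so the matrices $Q$ and $\Lambda$ in the statement coincide with those of the Hilbert-space theorem. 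The conclusion then follows verbatim: the relaxed supremum equals the sum of the positive eigenvalues of $P = S\Lambda S^*$ for any factorisation $Q = S^* S$, and this quantity bounds the option valuation from above.

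The main obstacle, such as it is, will be technical rather than conceptual: I need to verify that $\sqrt{\mathsf{a}_n} \in \mathsf{V}_2$ so that $\ket{\sqrt{\mathsf{a}_n}} \in \mathsf{H}$, which is equivalent to the finite-price condition $\mathbb{E}[\mathsf{a}_n] < \infty$, and that the exercise indicator lies in $\mathsf{V}_\infty$ so that $\hat{\mathsf{e}}$ is a bounded operator obtained from the GNS representation. Both requirements are mild consequences of the assumption that the $\mathsf{a}_n$ are legitimate priced securities, and none of the geometric content of the reduction depends on them.
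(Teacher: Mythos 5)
Your proposal is correct and follows essentially the same route as the paper: express the option price via the GNS inner product as $\sum_n \lambda_n \braket{\sqrt{\mathsf{a}_n}|\hat{\mathsf{e}}|\sqrt{\mathsf{a}_n}}$, observe that the left-multiplication operator $\hat{\mathsf{e}}$ is a projection, relax to all projections, and apply the eigenvalue theorem with $\ket{u_n}=\ket{\sqrt{\mathsf{a}_n}}$. The only difference is that you spell out the optimal exercise indicator, the identity $\braket{u_m|u_n}=Q_{mn}$, and the integrability and boundedness checks, none of which the paper makes explicit but all of which are correct and mild.
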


\begin{proof}
The GNS construction translates this statement into the language of the
previous theorem. The proof is then completed by observing that the option
valuation takes the form:%
\begin{equation}
\sum_{n}\lambda _{n}\braket{\sqrt{\mathsf{a}_n}|\hat{\mathsf{e}}|\sqrt{%
\mathsf{a}_n}}
\end{equation}%
where the projection $\hat{\mathsf{e}}$ is the left-multiplication operator
associated with the digital security $\mathsf{e}$ indicating the exercise
strategy for the option. The option valuation is thus bounded above by the
supremum of this expression over all projections which, by the previous
theorem, is the sum of the positive eigenvalues of the matrix $P$.
\end{proof}

The matrix $P$ is constructed from the diagonal matrix $\Lambda $, whose
diagonal elements are the quantities of the portfolio, and the symmetric
matrix $Q$, whose elements are the moments $\mathbb{E}[\sqrt{\mathsf{a}_m%
\mathsf{a}_n}]$ of the measure. The diagonal elements of $Q$ are the prices
of the assets, typically sourced from available market data. The
off-diagonal elements introduce additional volatility and correlation
dependencies, providing the model parametrisation for the bound.

The theorem generates a bound on the price of the basket option. In this
application, the matrices $Q$ and $\Lambda $ are given by:%
\begin{align}
Q& =%
\begin{bmatrix}
\sqrt{f_{m}f_{n}}\,q_{mn}%
\end{bmatrix}
\\
\Lambda & =%
\begin{bmatrix}
\lambda _{n}\delta _{mn}%
\end{bmatrix}
\notag
\end{align}%
Here, $f_{n}$ is the price of the $n$th asset and $q_{mn}$ is the normalised
cross-term for the $m$th and $n$th assets:%
\begin{align}
f_{n}& =\mathbb{E}[\mathsf{a}_n] \\
q_{mn}& =\frac{\mathbb{E}[\sqrt{\mathsf{a}_m\mathsf{a}_n}]}{\sqrt{\mathbb{E}[%
\mathsf{a}_m]\mathbb{E}[\mathsf{a}_n]}}  \notag
\end{align}%
The cross-term is driven by the volatilities of the assets and the
correlation between them, and can be expressed as:%
\begin{equation}
q_{mn}=\sqrt{(1-\nu _{m})(1-\nu _{n})}+\rho _{mn}\sqrt{\nu _{m}\nu _{n}}
\end{equation}%
where $\nu _{n}$ is the normalised variance of the square-root of the $n$th
asset and $\rho _{mn}$ is the correlation between the square-roots of the $m$%
th and $n$th assets:%
\begin{align}
\nu _{n}& =\frac{\mathbb{E}[\mathsf{a}_n]-\mathbb{E}[\sqrt{\mathsf{a}_n}]^{2}%
}{\mathbb{E}[\mathsf{a}_n]} \\
\rho _{mn}& =\frac{\mathbb{E}[\sqrt{\mathsf{a}_m\mathsf{a}_n}]-\mathbb{E}[%
\sqrt{\mathsf{a}_m}]\mathbb{E}[\sqrt{\mathsf{a}_n}]}{\sqrt{(\mathbb{E}[%
\mathsf{a}_m]-\mathbb{E}[\sqrt{\mathsf{a}_m}]^{2})(\mathbb{E}[\mathsf{a}_n]-%
\mathbb{E}[\sqrt{\mathsf{a}_n}]^{2})}}  \notag
\end{align}%
The price is positive, $f_{n}>0$, the root-variance lies in the range $0\leq
\nu _{n}\leq 1$, and the correlation lies in the range $-1\leq \rho
_{mn}\leq 1$. This completes the parametrisation of the model.

This result stands alone as an interesting application, providing an
intuitive parametrisation for the price of the basket option. There is an
ingenious interpretation of the result that extends its applicability beyond
basket options, leading to a significant family of upper bounds that
converges to the exact price as more information is absorbed. The key is to
recognise that the decomposition of the portfolio into constituent assets
can be arbitrarily refined, with each such decomposition yielding a new
upper bound.

The range of results obtained in this manner is limited only by the
creativity applied in the deconstruction of the portfolio. Taking a
partition of unity constructed from vanilla call and put options generates a
convergent family of upper bounds for the volatility smile. Another
application for interest rate products derives from the decomposition of the
swap rate in terms of its constituent forward rates, creating links between
the prices of swaptions and caplets. These applications are explored below.

\subsection{Vanilla options}

\begin{figure}[p]
\begin{tabular}{c}
\includegraphics[width=0.95%
\linewidth]{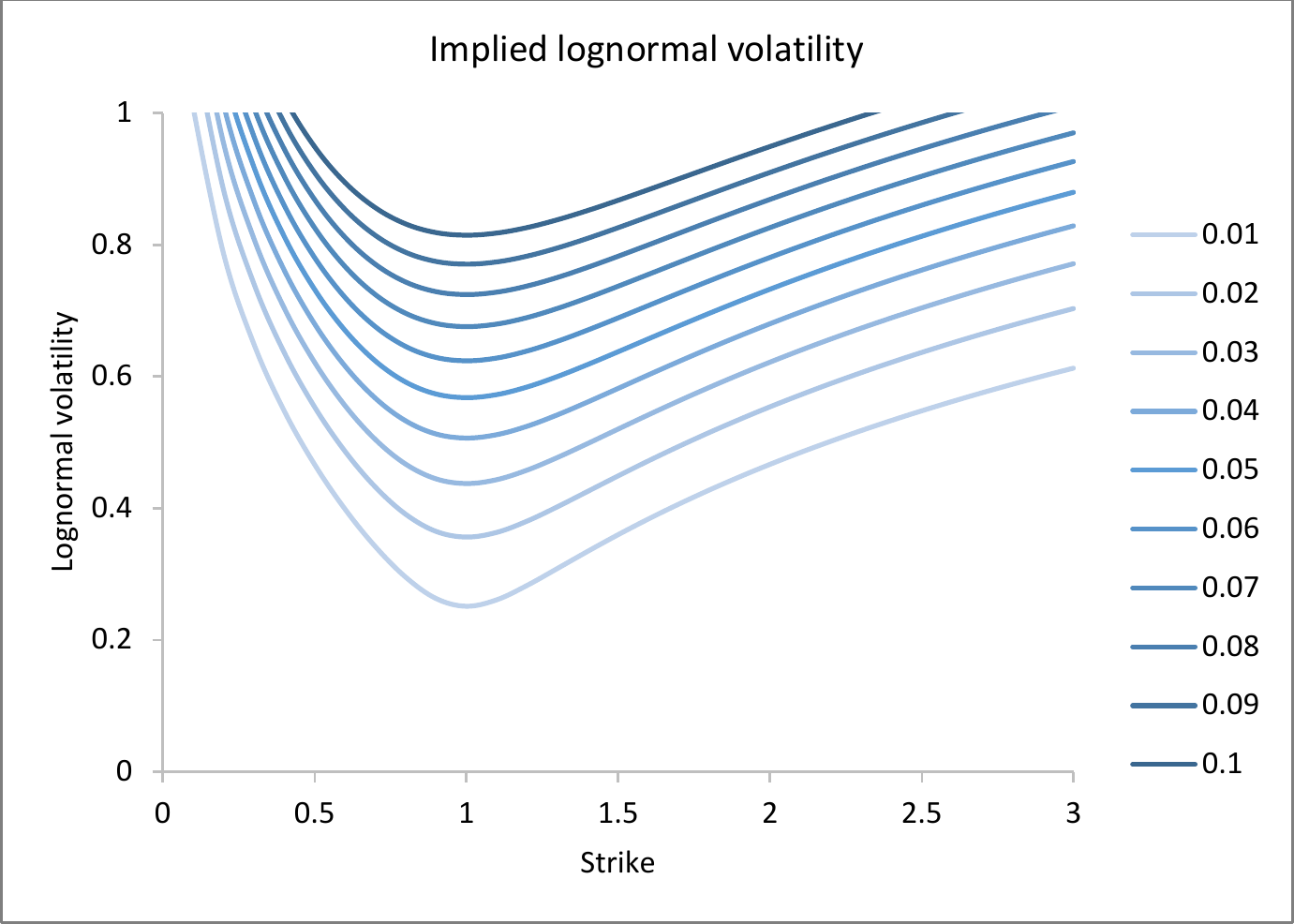}
\\ 
\includegraphics[width=0.95%
\linewidth]{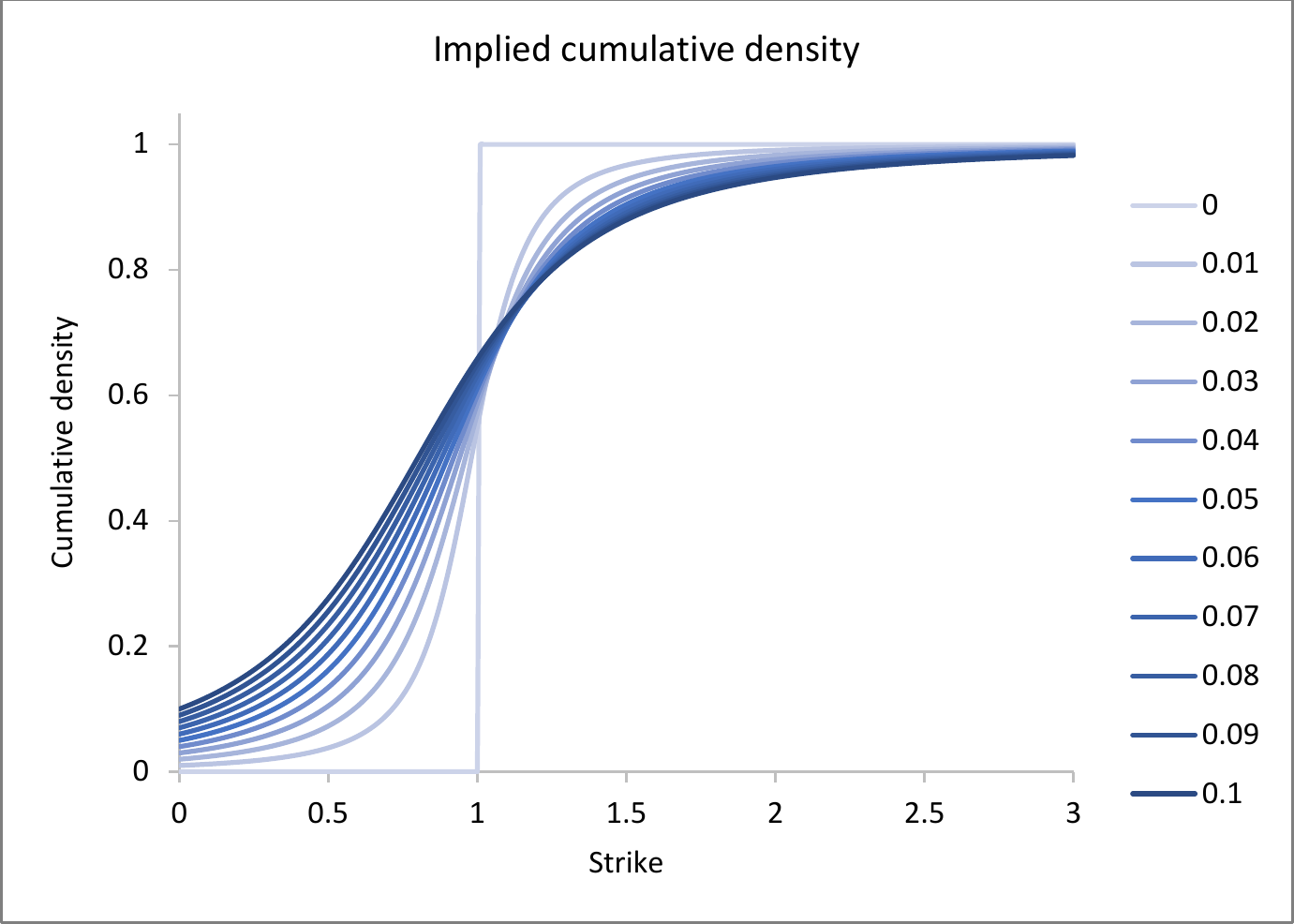}%
\end{tabular}%
\caption{The upper bound for the vanilla option price. In these graphs, the
price of the asset is fixed at 1 and the root-variance takes a range of
values between 0 and 0.1. The first graph expresses the bound in terms of
the implied lognormal volatility that recreates the option price in the
Black-Scholes model. The second graph shows the cumulative density function
implied by the option price, generated by differentiating the bound with
respect to the strike. The density combines a point density at strike 0 with
probability given by the root-variance, and a continuous density supported
on the upper half-line.}
\end{figure}

The eigenvalue problem as formulated above is solved using standard matrix
methods. In the case of two assets, this reduces to a quadratic equation
with an explicit solution. For the asset $\mathsf{a}$ and positive strike $k$%
, the price of the option to receive the portfolio $\mathsf{a}-k$ is bounded
above by:%
\begin{equation}
\mathbb{E}[(\mathsf{a}-k)^+]\leq p_{-}^{+}+p_{+}^{+}
\end{equation}%
where $p_{-}$ and $p_{+}$ are the eigenvalues of the matrix $P$ constructed
from the diagonal matrix $\Lambda $, whose diagonal elements depend on the
strike, and the symmetric matrix $Q$, whose elements are generated from the
moments $\mathbb{E}[\sqrt{\mathsf{a}}]$ and $\mathbb{E}[\mathsf{a}]$ of the
measure. The diagonal element of the matrix $Q$ is the price of the asset,
which is marked to market. The off-diagonal element introduces an additional
volatility dependency in the bound, controlled by a single model parameter.

The result is applied to generate a bound on the price of the vanilla
option. In this application, the matrices $Q$ and $\Lambda $ are given by:%
\begin{align}
Q& =%
\begin{bmatrix}
f & \sqrt{f(1-\nu )} \\ 
\sqrt{f(1-\nu )} & 1%
\end{bmatrix}
\\
\Lambda & =%
\begin{bmatrix}
1 & 0 \\ 
0 & -k%
\end{bmatrix}
\notag
\end{align}%
Here, $f$ is the price of the asset and $\nu $ is the normalised variance of
the square-root of the asset:%
\begin{align}
f& =\mathbb{E}[\mathsf{a}] \\
\nu & =\frac{\mathbb{E}[\mathsf{a}]-\mathbb{E}[\sqrt{\mathsf{a}}]^{2}}{%
\mathbb{E}[\mathsf{a}]}  \notag
\end{align}%
The price is positive, $f>0$, and the root-variance lies in the range $0\leq
\nu \leq 1$.

Represent the matrix $Q$ as $S^{\ast }S$, where $S$ is the lower-triangular
matrix generated using the Cholesky decomposition:%
\begin{equation}
S=%
\begin{bmatrix}
\sqrt{f\nu } & 0 \\ 
\sqrt{f(1-\nu )} & 1%
\end{bmatrix}%
\end{equation}%
The eigenvalue solution for the supremum is derived from the matrix $P$
defined as the combination $S\Lambda S^{\ast }$ of the diagonal matrix $%
\Lambda $ with the lower-triangular matrix $S$:%
\begin{equation}
P=%
\begin{bmatrix}
f\nu & f\sqrt{\nu (1-\nu )} \\ 
f\sqrt{\nu (1-\nu )} & f(1-\nu )-k%
\end{bmatrix}%
\end{equation}%
The eigenvalues $p$ of the matrix $P$ are the solutions of the quadratic
equation derived from the determinant condition $\det [P-p]=0$:%
\begin{equation}
p^{2}-(f-k)p-fk\nu =0
\end{equation}%
There are two solutions to this quadratic equation, but only one of them is
positive. This eigenvalue provides the bound for the option price:%
\begin{equation}
\mathbb{E}[(\mathsf{a}-k)^+]\leq \frac{1}{2}(f-k)+\frac{1}{2}\sqrt{%
(f-k)^{2}+4fk\nu }
\end{equation}%
The bound extracts only two moments, the price and root-variance, from the
measure, and applies to all pricing models calibrated to these moments.

\subsection{Refining the option price bound}

The bound for the option price is refined by using a partition of unity to
decompose the option payoff, resulting in a bound that is constrained by the
prices of options at a finite set of strikes. Consider the partition assets $%
u_{n}[\mathsf{a}]$, satisfying the properties $u_{n}[\mathsf{a}]\geq 0$ and $%
\sum_{n}u_{n}[\mathsf{a}]=1$. Using this partition, the spread between the
asset $\mathsf{a}$ and strike $k$ is expressed as the portfolio:%
\begin{equation}
\mathsf{a}-k=\sum_{n}\mathsf{a}u_{n}[\mathsf{a}]-k\sum_{n}u_{n}[\mathsf{a}]
\end{equation}%
This decomposition generates a bound for the option price from a matrix
whose diagonal elements are the prices of the partition assets scaled by the
asset and the strike. The utility of the bound then depends on whether an
intuitive parametrisation can be found for the off-diagonal moments implied
by the partition.

\begin{figure}[t]
\begin{tabular}{c}
\includegraphics[width=0.95%
\linewidth]{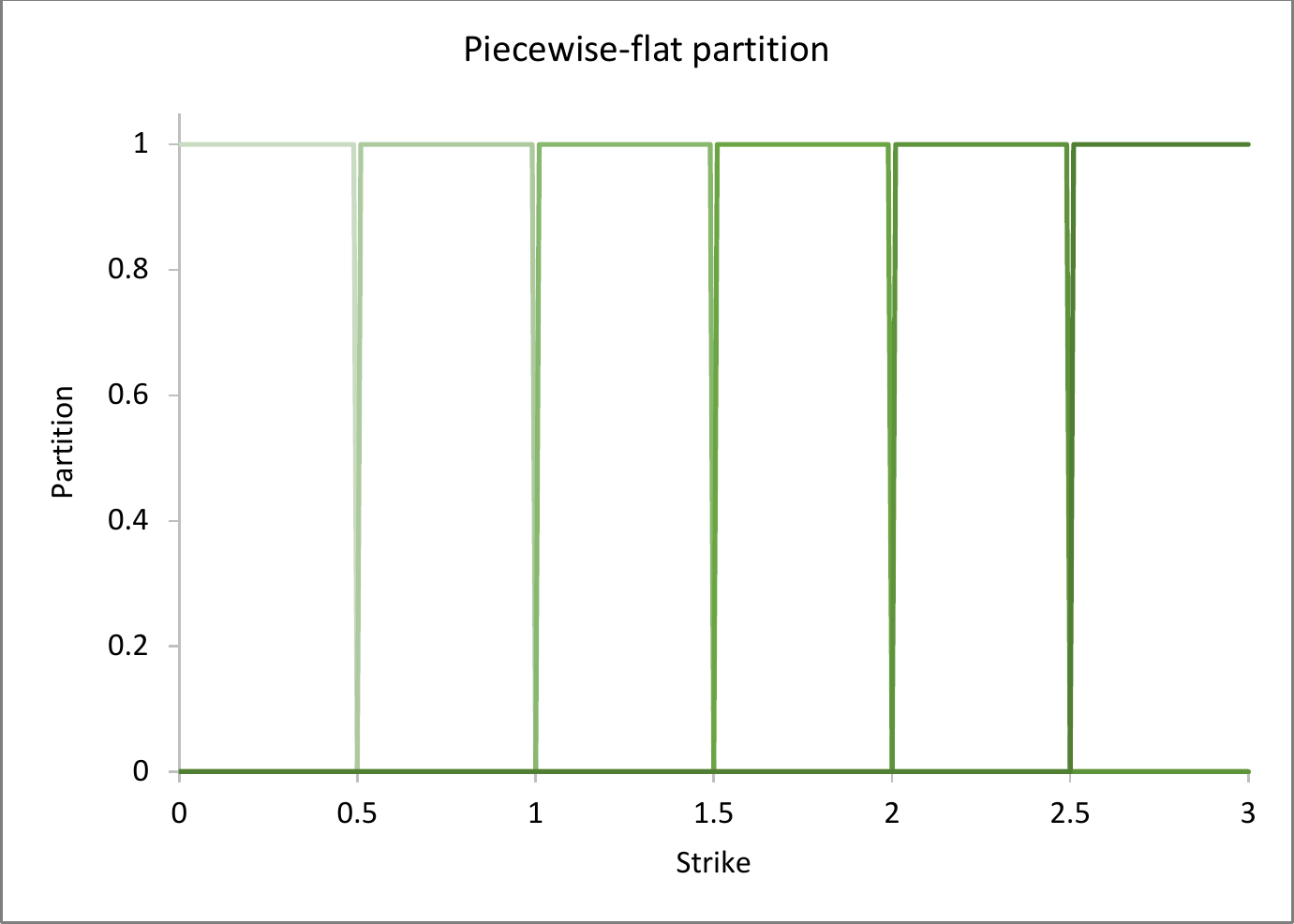}%
\end{tabular}%
\caption{Refining the upper bound for the vanilla option price. The graph
shows the piecewise-flat partition functions constructed with 5 strikes
evenly distributed from 0.5 to 2.5.}
\end{figure}

\begin{figure}[p]
\begin{tabular}{c}
\includegraphics[width=0.95%
\linewidth]{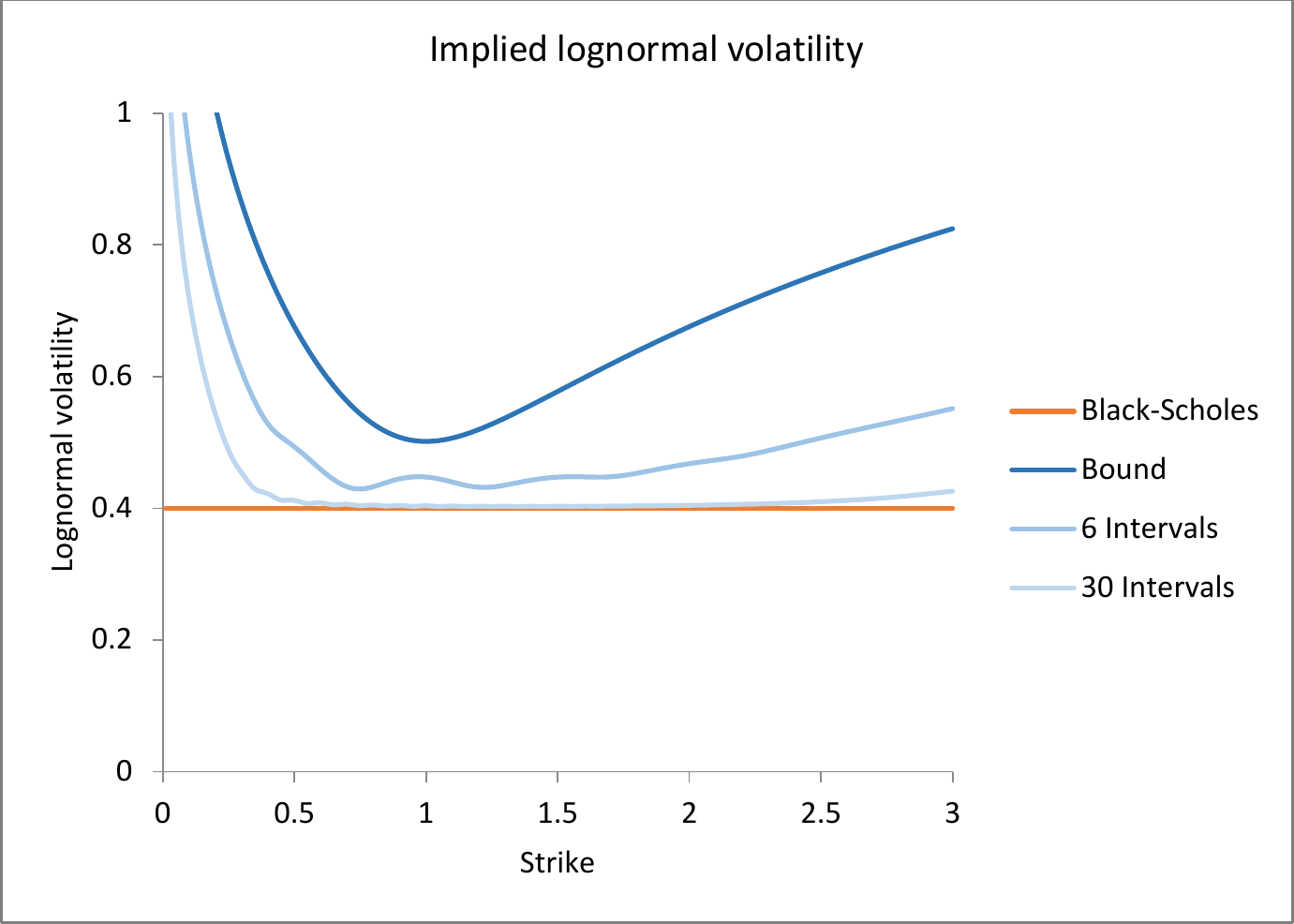}
\\ 
\includegraphics[width=0.95%
\linewidth]{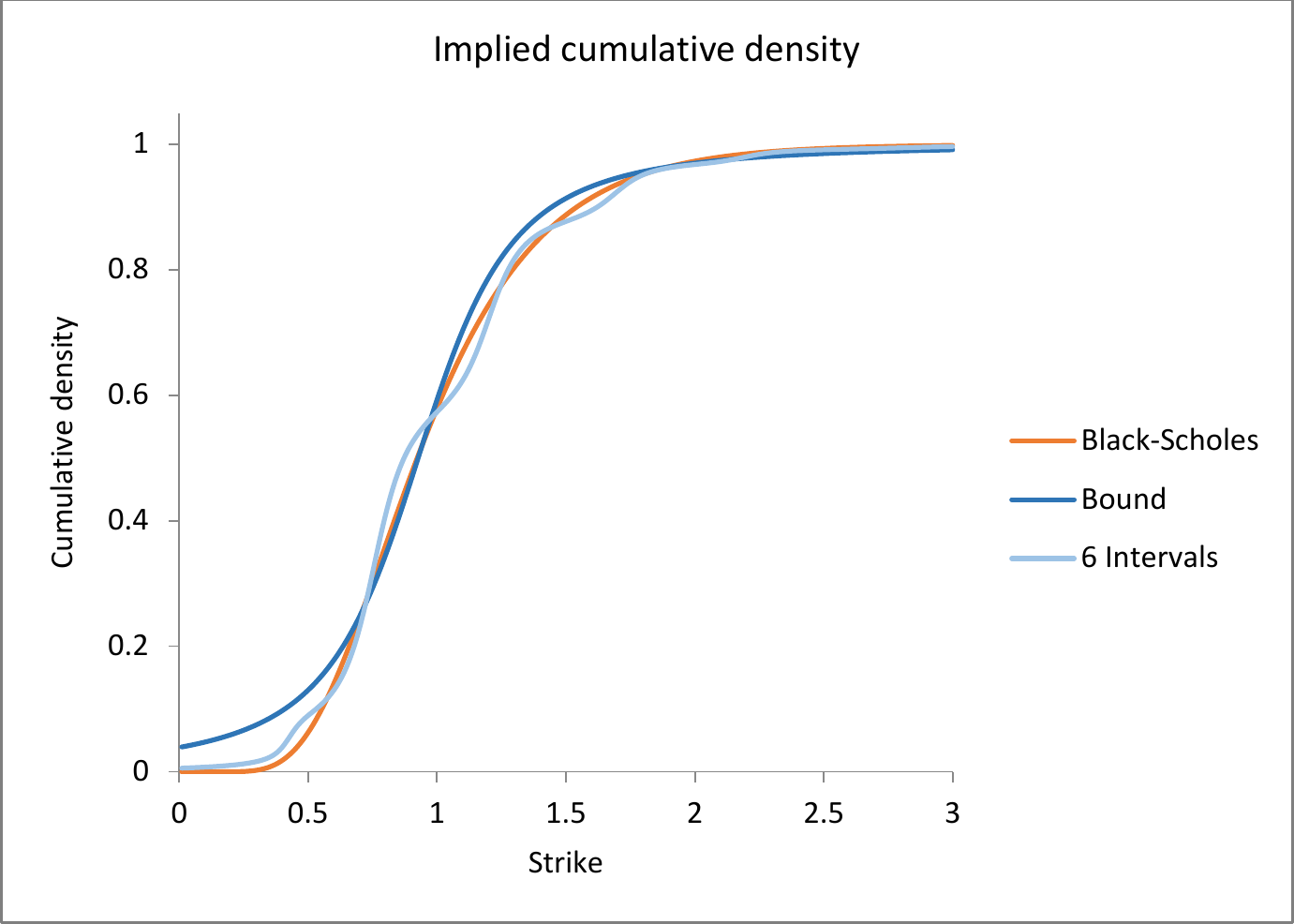}%
\end{tabular}%
\caption{Refining the upper bound for the vanilla option price. In these
graphs, the calibrated moments for the upper bound are extracted from a
Black-Scholes model with mean 1 and volatility 40\%. The three upper bounds
shown correspond to three different subdivisions of the upper half-line,
with 1, 6 and 30 intervals respectively. In the case of 6 intervals, the
boundaries are zero, 5 strikes evenly distributed from 0.5 to 2.5, and
infinity. In the case of 30 intervals, the boundaries are zero, 29 strikes
evenly distributed from 0.1 to 2.9, and infinity. Increasing the number of
intervals adds more information to the upper bound, refining it and
converging towards the Black-Scholes model.}
\end{figure}

A simple example constructs the partition from a decomposition of the upper
half-line into subsets $U_{n}\subset \mathbb{R}_{+}$ satisfying $\cup
_{n}U_{n}=\mathbb{R}_{+}$ and $U_{m}\cap U_{n}=\emptyset $ for $m\neq n$.
The partition comprises the digital options on the asset indicated by the
subsets:%
\begin{equation}
u_{n}[\mathsf{a}]=(\mathsf{a}\in U_{n})
\end{equation}%
with prices $d_{n}$ given by:%
\begin{equation}
d_{n}=\mathbb{E}[(\mathsf{a}\in U_{n})]
\end{equation}%
The matrices $Q$ and $\Lambda $ both divide into four quadrants, with each
quadrant containing a diagonal matrix:%
\begin{align}
Q& =\left[ \begin{array}{c;{1pt/1pt}c} f_{n}d_{n}\delta _{mn} &
\sqrt{f_{n}(1-\nu _{n})}d_{n}\delta _{mn} \\ \hdashline[1pt/1pt]
\sqrt{f_{n}(1-\nu _{n})}d_{n}\delta _{mn} & d_{n}\delta _{mn}\end{array}%
\right] \\
\Lambda & =\left[ \begin{array}{c;{1pt/1pt}c} \delta _{mn} & 0 \\
\hdashline[1pt/1pt] 0 & -k\delta _{mn}\end{array}\right]  \notag
\end{align}%
where $f_{n}$ is the price of the asset and $\nu _{n}$ is the normalised
variance of the square-root of the asset conditional on the asset being in
the subset $U_{n}$:%
\begin{align}
f_{n}& =\mathbb{E}_{n}[\mathsf{a}] \\
\nu _{n}& =\frac{\mathbb{E}_{n}[\mathsf{a}]-\mathbb{E}_{n}[\sqrt{\mathsf{a}}%
]^{2}}{\mathbb{E}_{n}[\mathsf{a}]}  \notag
\end{align}%
In these definitions, the measure $\mathbb{E}_{n}$ is the measure $\mathbb{E}
$ conditional on $(\mathsf{a}\in U_{n})$, defined by:%
\begin{equation}
\mathbb{E}_{n}[\mathsf{b}]=\frac{\mathbb{E}[\mathsf{b}(\mathsf{a}\in U_{n})]%
}{\mathbb{E}[(\mathsf{a}\in U_{n})]}
\end{equation}%
for the security $\mathsf{b}$. The digital prices satisfy $0\leq d_{n}\leq 1$%
, while the conditional price is positive, $f_{n}>0$, and the conditional
root-variance lies in the range $0\leq \nu _{n}\leq 1$. These conditional
moments are normalised by:%
\begin{align}
\sum_{n}d_{n}& =1 \\
\sum_{n}f_{n}d_{n}& =f  \notag \\
\sum_{n}\sqrt{f_{n}(1-\nu _{n})}d_{n}& =\sqrt{f(1-\nu )}  \notag
\end{align}

The decomposition of the upper half-line refines the bound for the option
price, using a breakdown of the asset price and root-variance conditional on
the asset being localised in nominated subsets. As the decomposition is
further refined, more information from the pricing measure is incorporated
into the bound, which converges to the price of the option in the limit of
pointwise localisation.

\subsection{Refinements based on option payoffs}

An alternative approach determines the bound for the option price from the
prices of options at a finite set of strikes. For the positive strikes $%
k_{1}<\cdots <k_{N}$, the partition comprises the functions:%
\begin{align}
u_{1}[\mathsf{a}]& =1-\frac{(\mathsf{a}-k_{1})^{+}-(\mathsf{a}-k_{2})^{+}}{%
k_{2}-k_{1}} \\
u_{n}[\mathsf{a}]& =1-\frac{(k_{n}-\mathsf{a})^{+}-(k_{n-1}-\mathsf{a})^{+}}{%
k_{n}-k_{n-1}}-\frac{(\mathsf{a}-k_{n})^{+}-(\mathsf{a}-k_{n+1})^{+}}{%
k_{n+1}-k_{n}}  \notag \\
u_{N}[\mathsf{a}]& =1-\frac{(k_{N}-\mathsf{a})^{+}-(k_{N-1}-\mathsf{a})^{+}}{%
k_{N}-k_{N-1}}  \notag
\end{align}%
These functions are positive and sum to one, supported on the domains:%
\begin{align}
\func{supp}[u_{1}]& =(-\infty ,k_{2}) \\
\func{supp}[u_{n}]& =(k_{n-1},k_{n+1})  \notag \\
\func{supp}[u_{N}]& =(k_{N-1},\infty )  \notag
\end{align}%
Aside from the diagonal products, only consecutive functions have nonzero
products:%
\begin{equation}
\sqrt{u_{n}[\mathsf{a}]u_{n+1}[\mathsf{a}]}=(k_{n}<\mathsf{a}<k_{n+1})\frac{%
\sqrt{(\mathsf{a}-k_{n})(k_{n+1}-\mathsf{a})}}{k_{n+1}-k_{n}}
\end{equation}

\begin{figure}[t]
\begin{tabular}{c}
\includegraphics[width=0.95%
\linewidth]{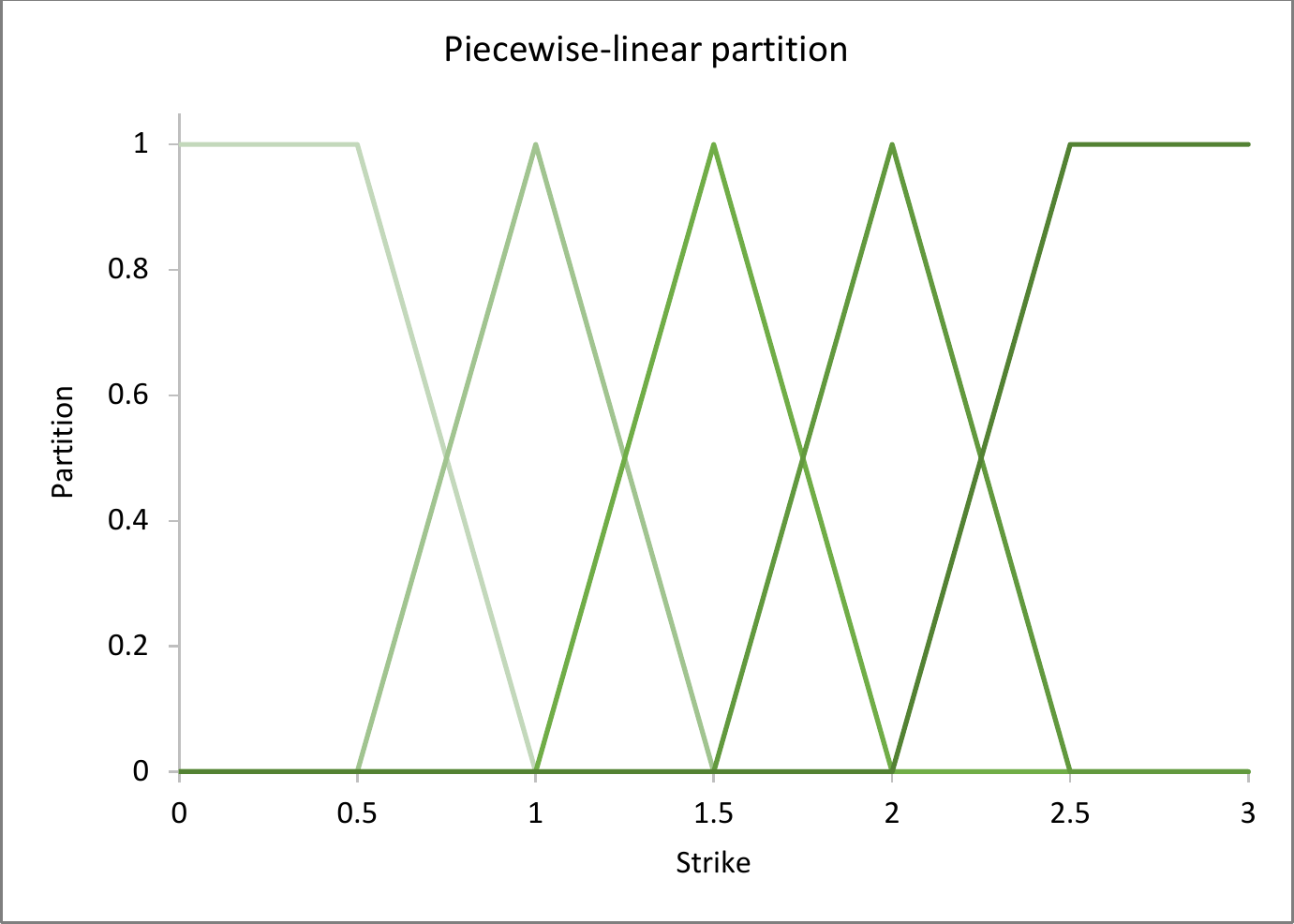}%
\end{tabular}%
\caption{Refining the upper bound for the vanilla option price. The graph
shows the piecewise-linear partition functions constructed with 5 strikes
evenly distributed from 0.5 to 2.5.}
\end{figure}

\begin{figure}[p]
\begin{tabular}{c}
\includegraphics[width=0.95%
\linewidth]{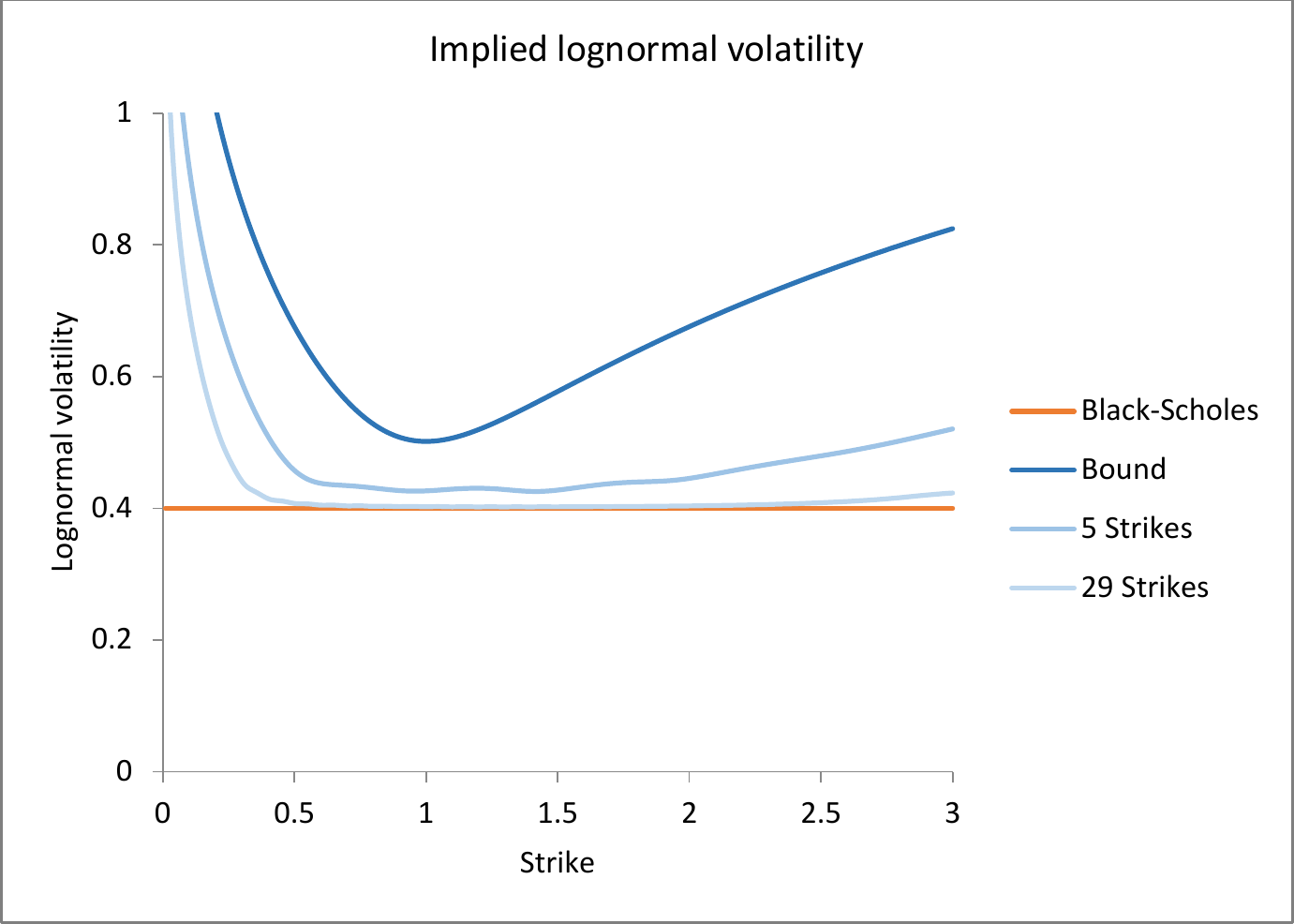}
\\ 
\includegraphics[width=0.95%
\linewidth]{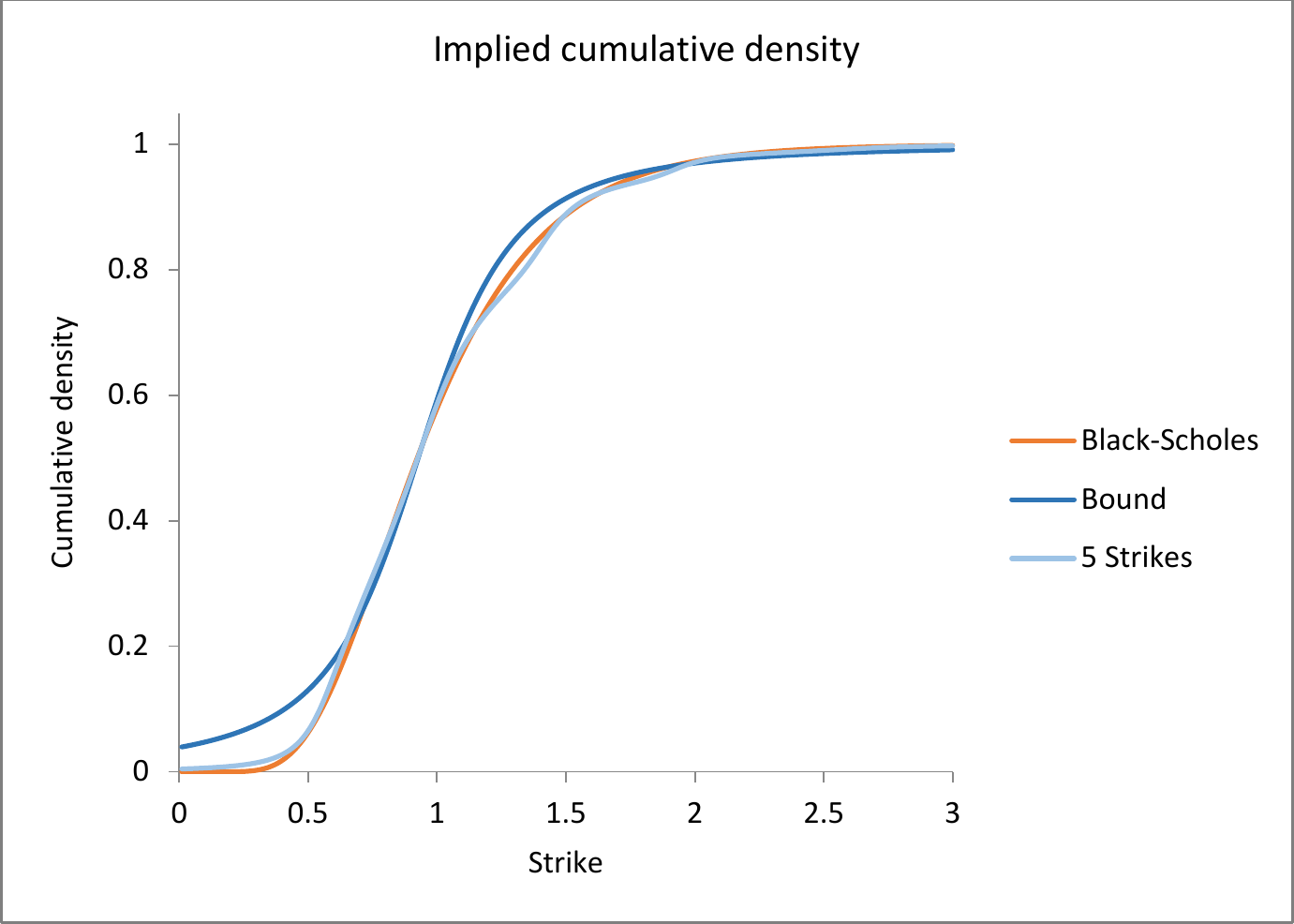}%
\end{tabular}%
\caption{Refining the upper bound for the vanilla option price. In these
graphs, the calibrated moments for the upper bound are extracted from a
Black-Scholes model with mean 1 and volatility 40\%. The three upper bounds
shown correspond to three different sets of strikes, with 0, 5 and 29
strikes respectively. In the case of 5 strikes, the strikes are evenly
distributed from 0.5 to 2.5. In the case of 29 strikes, the strikes are
evenly distributed from 0.1 to 2.9. This example uses piecewise-linear
partition functions, rather than the piecewise-flat partition functions of
the previous example, and this improves the smoothness of the refined bound.}
\end{figure}

The four quadrants of the $2N$-dimensional matrix $Q$ are tridiagonal. The
upper-left quadrant has nonzero elements:%
\begin{align}
Q_{nn}& =\mathbb{E}[\mathsf{a}u_n[\mathsf{a}]] \\
Q_{n\,n+1}=Q_{n+1\,n}& =\mathbb{E}[\mathsf{a}\sqrt{u_n[\mathsf{a}]u_{n+1}[%
\mathsf{a}]}]  \notag
\end{align}%
The upper-right and lower-left quadrants have nonzero elements:%
\begin{align}
Q_{nn}& =\mathbb{E}[\sqrt{\mathsf{a}}u_n[\mathsf{a}]] \\
Q_{n\,n+1}=Q_{n+1\,n}& =\mathbb{E}[\sqrt{\mathsf{a}u_n[\mathsf{a}]u_{n+1}[%
\mathsf{a}]}]  \notag
\end{align}%
The lower-right quadrant has nonzero elements:%
\begin{align}
Q_{nn}& =\mathbb{E}[u_n[\mathsf{a}]] \\
Q_{n\,n+1}=Q_{n+1\,n}& =\mathbb{E}[\sqrt{u_n[\mathsf{a}]u_{n+1}[\mathsf{a}]}]
\notag
\end{align}%
These elements are driven by correlations between the options that are
param\-etrised similarly to the basket option case. In practice, a simple
parametric model, such as the Black-Scholes model, can be used to imply
sensible values for the correlations dependent on a smaller set of model
parameters. By using continuous basis functions, this partition generates a
smoother bound than that implied by the digital partition of the previous
example.

\subsection{Foreign exchange options}

The previous examples show how families of bounds for the option price can
be obtained by subdividing the assets into localised components. An
alternative strategy decomposes the asset into more fundamental economic
units, using an understanding of the financial structure of the asset. The
next example applies this approach to generate bounds for the price of an
option on a cross FX rate.

The option on the FX rate $\mathsf{x}$ is a vanilla option in the form
considered previously, and the option price is subject to the same bound:%
\begin{equation}
\mathbb{E}[(\mathsf{x}-k)^{+}]\leq \frac{1}{2}(f-k)+\frac{1}{2}\sqrt{%
(f-k)^{2}+4fk\nu }
\end{equation}%
where $f$ is the price of the FX rate and $\nu $ is the normalised variance
of the square-root of the FX rate:%
\begin{align}
f& =\mathbb{E}[\mathsf{x}] \\
\nu & =\frac{\mathbb{E}[\mathsf{x}]-\mathbb{E}[\sqrt{\mathsf{x}}]^{2}}{%
\mathbb{E}[\mathsf{x}]}  \notag
\end{align}%
In these expressions, the pricing measure $\mathbb{E}$ is associated with
the base currency of the FX rate.

Options on an illiquid FX rate are more commonly written on the cross FX
rate $\mathsf{x}=\mathsf{x}_{1}/\mathsf{x}_{2}$, where $\mathsf{x}_{1}$ and $%
\mathsf{x}_{2}$ are the FX rates for the two currencies versus a fixed
domestic currency. In this situation, the bound for the vanilla option
continues to apply, but the volatility can be further decomposed in terms of
the volatilities and correlation of the two contributing FX rates:%
\begin{equation}
\nu =1-(\sqrt{(1-\nu _{1})(1-\nu _{2})}+\rho \sqrt{\nu _{1}\nu _{2}})^{2}
\end{equation}%
where $\nu _{1}$ and $\nu _{2}$ are the normalised variances of the
square-roots of the liquid FX rates and $\rho $ is the correlation between
the square-roots of the liquid FX rates:%
\begin{align}
\nu _{1}& =\frac{\bar{\mathbb{E}}[\mathsf{x}_{1}]-\bar{\mathbb{E}}[\sqrt{%
\mathsf{x}_{1}}]^{2}}{\bar{\mathbb{E}}[\mathsf{x}_{1}]} \\
\nu _{2}& =\frac{\bar{\mathbb{E}}[\mathsf{x}_{2}]-\bar{\mathbb{E}}[\sqrt{%
\mathsf{x}_{2}}]^{2}}{\bar{\mathbb{E}}[\mathsf{x}_{2}]}  \notag \\
\rho & =\frac{\bar{\mathbb{E}}[\sqrt{\mathsf{x}_{1}\mathsf{x}_{2}}]-\bar{%
\mathbb{E}}[\sqrt{\mathsf{x}_{1}}]\mathbb{E}[\sqrt{\mathsf{x}_{2}}]}{\sqrt{(%
\bar{\mathbb{E}}[\mathsf{x}_{1}]-\bar{\mathbb{E}}[\sqrt{\mathsf{x}_{1}}%
]^{2})(\bar{\mathbb{E}}[\mathsf{x}_{2}]-\bar{\mathbb{E}}[\sqrt{\mathsf{x}_{2}%
}]^{2})}}  \notag
\end{align}%
In these expressions, the pricing measure $\bar{\mathbb{E}}$ is associated
with the domestic currency, related to the pricing measure $\mathbb{E}$ by:%
\begin{equation}
\mathbb{E}[\mathsf{a}]=\frac{\bar{\mathbb{E}}[\mathsf{ax}_{2}]}{\bar{\mathbb{%
E}}[\mathsf{x}_{2}]}
\end{equation}%
If the two contributing FX rates have liquid option markets, the
volatilities $\nu _{1}$ and $\nu _{2}$ can be replicated from the prices of
options, leaving the correlation $\rho $ to parametrise the bound for the
price of the option on the cross FX rate.

\subsection{Swaptions and caplets}

Another example, taken from the interest rate market, considers the
decomposition of the swap rate into its constituent forward rates. Inverting
the relationship between swap and forward rates leads to bounds on the
prices of forward-starting caplets expressed in terms of the distributions
of the swap rates.

The $n$-period swap rate $\mathsf{s}_{n}$ is decomposed as the weighted
average of the forward rates $\mathsf{r}_{m}$ for $m=1,\ldots ,n$:%
\begin{equation}
\mathsf{s}_{n}=\sum_{m=1}^{n}\frac{p_{m}\delta _{m}}{\sum_{l=1}^{n}p_{l}%
\delta _{l}}\mathsf{r}_{m}
\end{equation}%
In this expression, $p_{n}$ is the discount factor to the $n$th payment date
and $\delta _{n}$ is the daycount fraction for the $n$th accrual period,
where for simplicity the daycount conventions on the fixed and float legs
are assumed to be the same. The weights in this weighted average are
positive and sum to one. In the following discussion these weights are
assumed to be deterministic, an approximation that not only allows the swap
rate to be expressed as a linear combination of the forward rates, but also
avoids complications with differences in the pricing measures associated
with the annuities. These considerations, while significant, are beyond the
scope of the present article.

Inverting the above relationship, the forward rate $\mathsf{r}_{n}$ is
decomposed in terms of the swap rates $\mathsf{s}_{n}$ and $\mathsf{s}_{n-1}$%
:%
\begin{equation}
\mathsf{r}_{n}=(\lambda _{n}+1)\mathsf{s}_{n}-\lambda _{n}\mathsf{s}_{n-1}
\end{equation}%
with weight given by:%
\begin{equation}
\lambda _{n}=\frac{\sum_{m=1}^{n-1}p_{m}\delta _{m}}{p_{n}\delta _{n}}
\end{equation}%
The general result for the bound on the price of a basket option can be
applied to this decomposition. Consider the forward-starting caplet with
strike $k_{n}$ on the $n$th forward rate $\mathsf{r}_{n}$. The payoff for
the caplet decomposes in terms of the swap rates $\mathsf{s}_{n}$ and $%
\mathsf{s}_{n-1}$:%
\begin{equation}
\mathsf{r}_{n}-k_{n}=(\lambda _{n}+1)\mathsf{s}_{n}-\lambda _{n}\mathsf{s}%
_{n-1}-k_{n}
\end{equation}%
In this application, the matrices $Q$ and $\Lambda $ are the
three-dimensional matrices given by:%
\begin{align}
Q& =%
\begin{bmatrix}
f_{n} & \sqrt{f_{n-1}f_{n}}\,q_{n-1\,n} & \sqrt{f_{n}(1-\nu _{n})} \\ 
\sqrt{f_{n-1}f_{n}}\,q_{n-1\,n} & f_{n-1} & \sqrt{f_{n-1}(1-\nu _{n-1})} \\ 
\sqrt{f_{n}(1-\nu _{n})} & \sqrt{f_{n-1}(1-\nu _{n-1})} & 1%
\end{bmatrix}
\\
\Lambda & =%
\begin{bmatrix}
\lambda _{n}+1 & 0 & 0 \\ 
0 & -\lambda _{n} & 0 \\ 
0 & 0 & -k_{n}%
\end{bmatrix}
\notag
\end{align}%
Here, $f_{n}$ is the price of the $n$th swap rate and $\nu _{n}$ is the
normalised variance of the square-root of the $n$th swap rate:%
\begin{align}
f_{n}& =\mathbb{E}[\mathsf{s}_n] \\
\nu _{n}& =\frac{\mathbb{E}[\mathsf{s}_n]-\mathbb{E}[\sqrt{\mathsf{s}_n}]^{2}%
}{\mathbb{E}[\mathsf{s}_n]}  \notag
\end{align}%
and the cross-term $q_{mn}$ is defined from the correlation $\rho _{mn}$
between the square-roots of the $m$th and $n$th swap rates:%
\begin{equation}
\rho _{mn}=\frac{\mathbb{E}[\sqrt{\mathsf{s}_m\mathsf{s}_n}]-\mathbb{E}[%
\sqrt{\mathsf{s}_m}]\mathbb{E}[\sqrt{\mathsf{s}_n}]}{\sqrt{(\mathbb{E}[%
\mathsf{s}_m]-\mathbb{E}[\sqrt{\mathsf{s}_m}]^{2})(\mathbb{E}[\mathsf{s}_n]-%
\mathbb{E}[\sqrt{\mathsf{s}_n}]^{2})}}
\end{equation}%
by:%
\begin{equation}
q_{mn}=\sqrt{(1-\nu _{m})(1-\nu _{n})}+\rho _{mn}\sqrt{\nu _{m}\nu _{n}}
\end{equation}%
The price is positive, $f_{n}>0$, the root-variance lies in the range $0\leq
\nu _{n}\leq 1$, and the correlation lies in the range $-1\leq \rho
_{mn}\leq 1$. The price and root-variance are determined from the market for
swaps and swaptions, leaving the correlation as the model parameter for the
price of the forward-starting caplet.

\begin{figure}[p]
\begin{tabular}{c}
\includegraphics[width=0.95%
\linewidth]{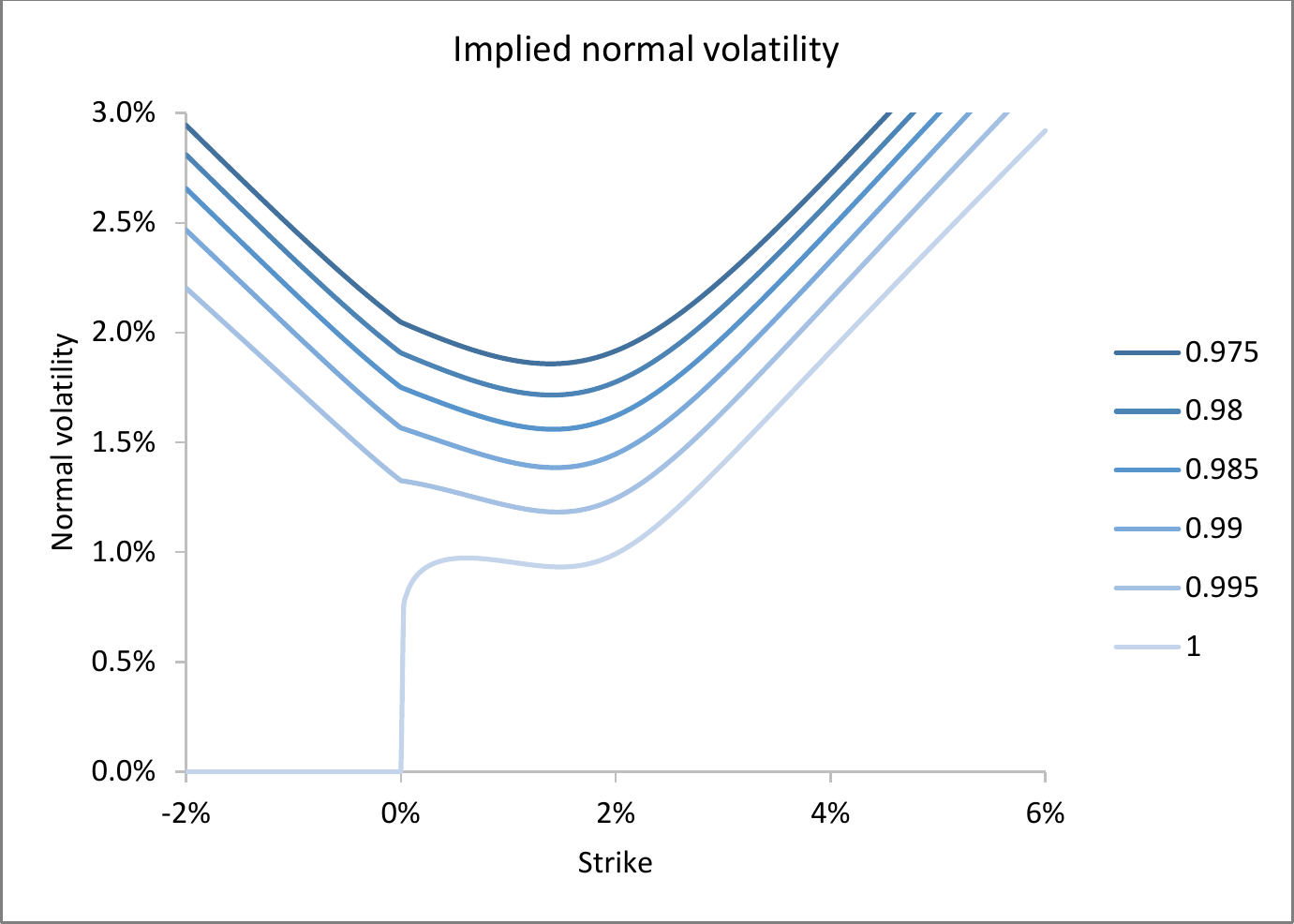}
\\ 
\includegraphics[width=0.95%
\linewidth]{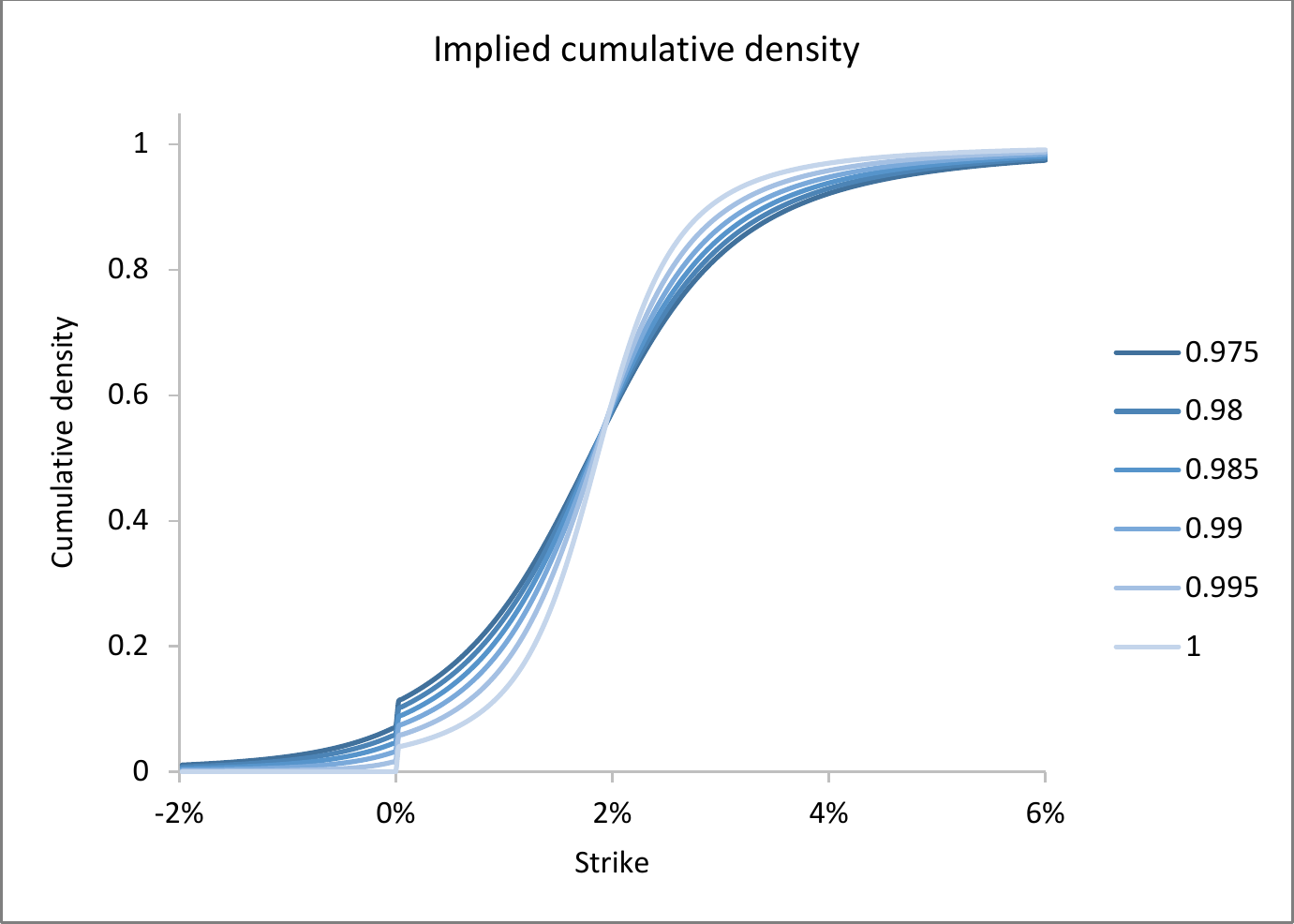}%
\end{tabular}%
\caption{Bounds for the prices of forward-starting caplets. This example
considers the option on the forward rate maturing 10 periods after expiry,
equivalent to an option on the spread between the 10-period and 9-period
swaps. The discount rate is fixed at 1\%, both swap rates are taken to be
2\%, and the root-variances of the swap rates match those generated by the
Black-Scholes model with 40\% volatility. No shift is applied to the swap
rates, and the correlation between the swap rates is varied between 0.975
and 1. The implied cumulative density describes a distribution with
continuous support, with a discrete probability at strike 0\%.}
\end{figure}

\begin{figure}[p]
\begin{tabular}{c}
\includegraphics[width=0.95%
\linewidth]{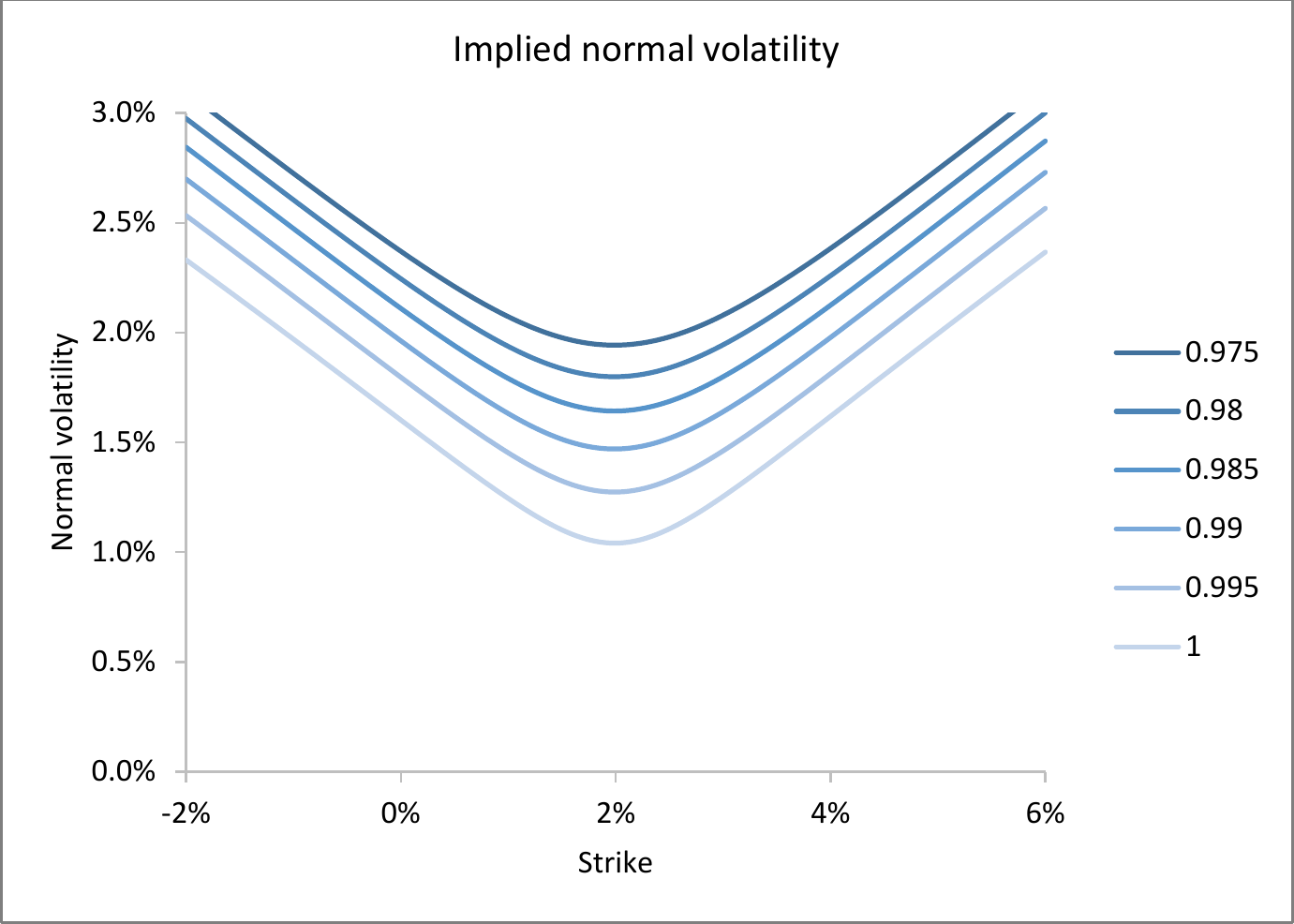}
\\ 
\includegraphics[width=0.95%
\linewidth]{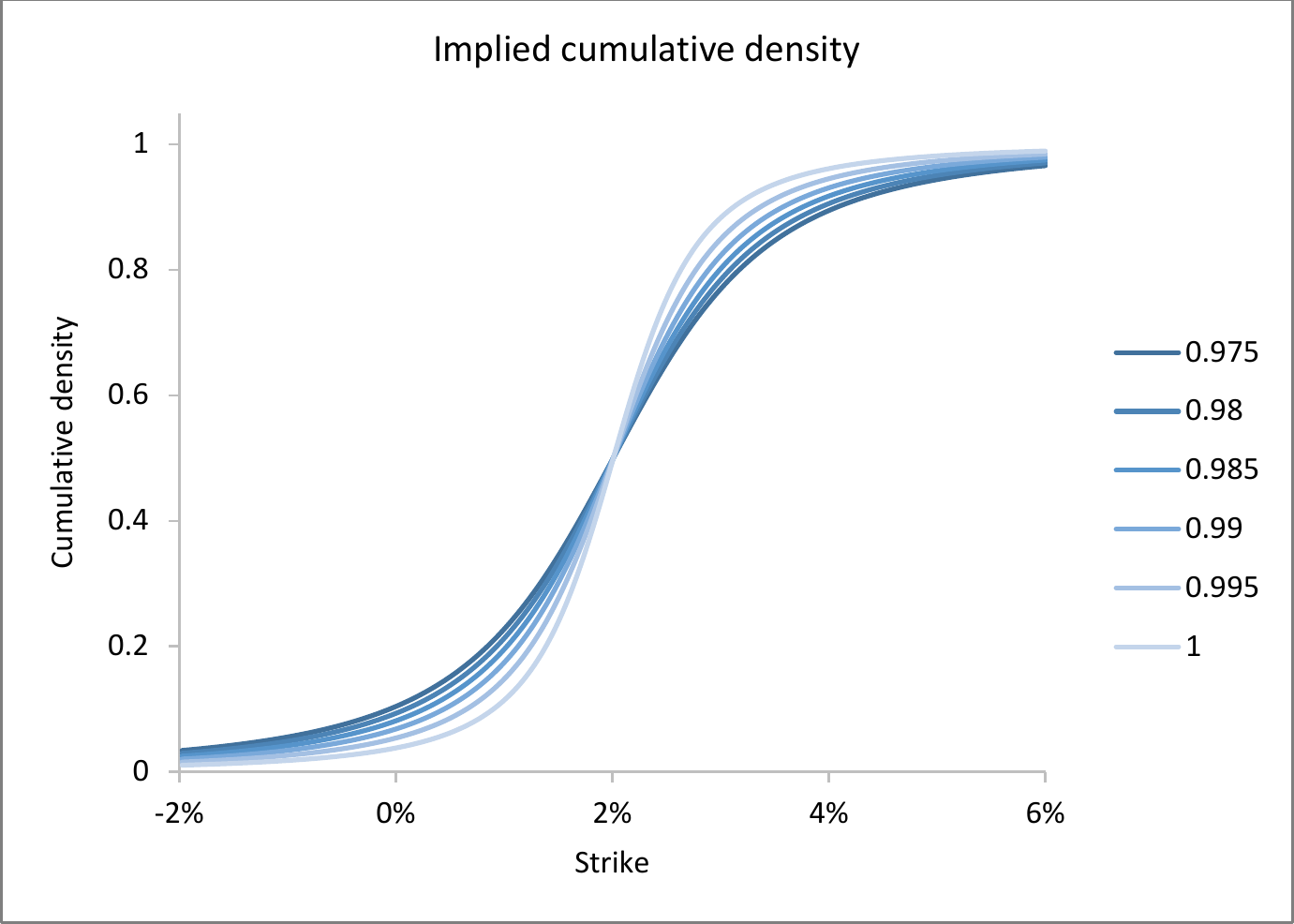}%
\end{tabular}%
\caption{Bounds for the prices of forward-starting caplets. This example
considers the option on the forward rate maturing 10 periods after expiry,
equivalent to an option on the spread between the 10-period and 9-period
swaps. The discount rate is fixed at 1\%, both swap rates are taken to be
2\%, and the root-variances of the swap rates match those generated by the
Black-Scholes model with 40\% volatility. Maximum shift is applied to the
swap rates, and the correlation between the swap rates is varied between
0.975 and 1. The implied cumulative density describes a distribution with
continuous support, without the discrete probability of the previous
example. }
\end{figure}

\begin{figure}[p]
\begin{tabular}{c}
\includegraphics[width=0.95%
\linewidth]{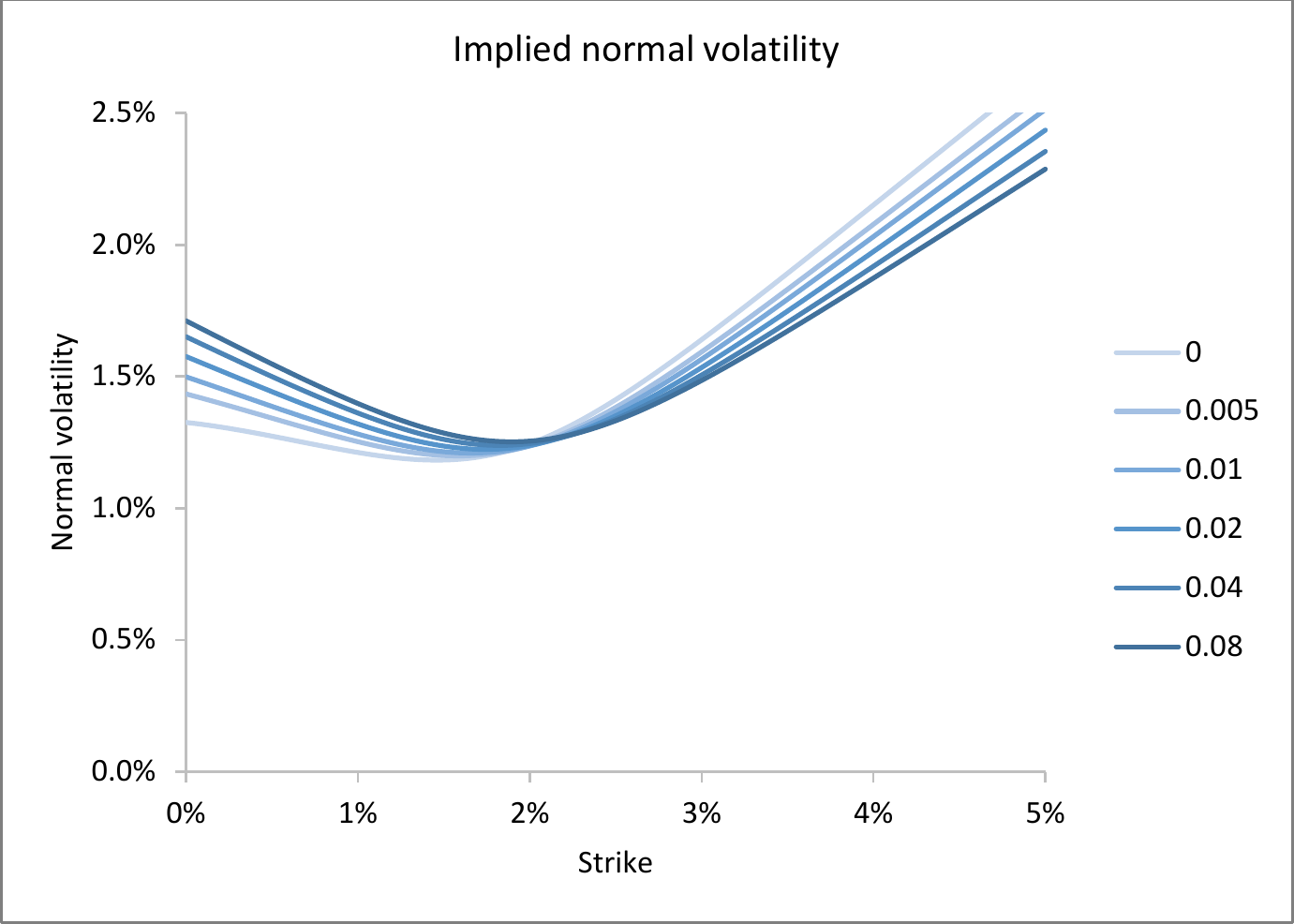}
\\ 
\includegraphics[width=0.95%
\linewidth]{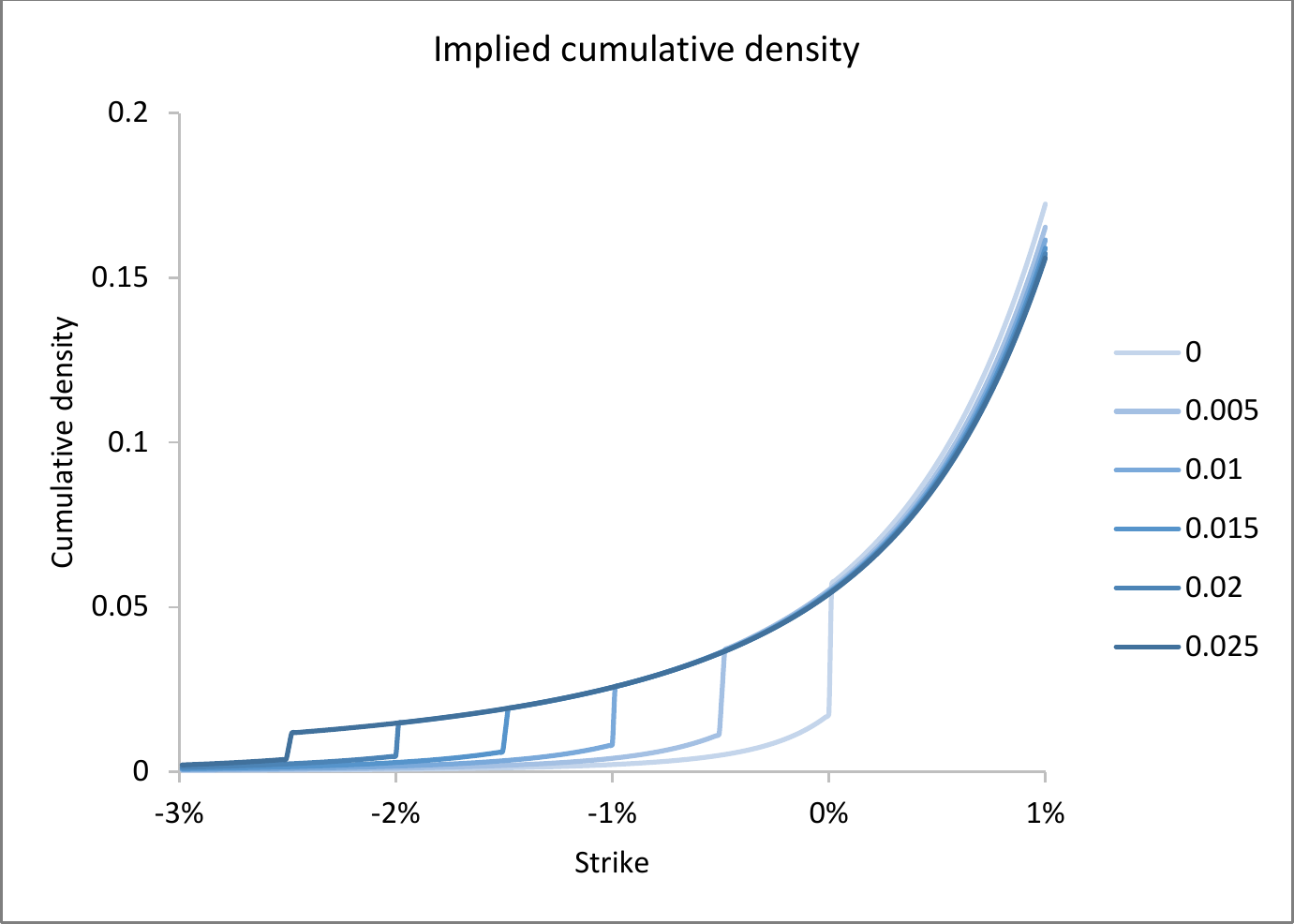}%
\end{tabular}%
\caption{Bounds for the prices of forward-starting caplets. This example
considers the option on the forward rate maturing 10 periods after expiry,
equivalent to an option on the spread between the 10-period and 9-period
swaps. The discount rate is fixed at 1\%, both swap rates are taken to be
2\%, and the root-variances of the swap rates match those generated by the
Black-Scholes model with 40\% volatility. The correlation is fixed at 0.995,
and the shift is varied. The first graph shows how the shift controls the
skew of the implied normal volatility. The second graph shows the tail of
the implied cumulative density, and how the shift moves the position of the
discrete probability. This strike marks the point where the matrix $P$
switches from having two positive eigenvalues to one.}
\end{figure}

There is an implicit assumption that the swap rates are positive in this
application, which cannot be guaranteed. The economic floor for the forward
rate is $\mathsf{r}_{n}\geq -1/\delta _{n}$, so the economic floor for the
swap rate is $\mathsf{s}_{n}\geq -1/\bar{\delta}_{n}$ where the daycount
fraction $\bar{\delta}_{n}$ is the weighted average of the daycount
fractions $\delta _{m}$ for $m=1,\ldots ,n$:%
\begin{equation}
\bar{\delta}_{n}=\sum_{m=1}^{n}\frac{p_{m}}{\sum_{l=1}^{n}p_{l}}\delta _{m}
\end{equation}%
Positivity can be restored to the terms in the decomposition of the forward
rate by shifting the swap rates and strike by a proportion $\alpha $ of the
economic floor:%
\begin{align}
\mathsf{s}_{n}& \mapsto \mathsf{s}_{n}+\alpha /\bar{\delta}_{n} \\
k_{n}& \mapsto k_{n}+\alpha /\delta _{n}  \notag
\end{align}%
Applying these substitutions in the expressions above for the matrices $Q$
and $\Lambda $ generates an upper bound based on swap rates with negative
floors. The additional model parameter in this construction is the shift,
which takes values in the range $0\leq \alpha \leq 1$.

\section{Attaining the option price bound}

Application of the GNS\ construction to the pricing measure generates upper
bounds for the option price, and with the creative decomposition of the
option portfolio this leads to a diverse range of bounds depending on
partial information extracted from the measure. There is, however, no
guarantee that the bound derived from this construction is useful, though
the examples of the previous section suggest this is the case.

One question to ask is whether there is a measure satisfying the constraints
that attains the bound for the option price, and in this statement there are
two variations: local and global attainment. For options on portfolios
generated from the assets $\mathsf{a}_{n}$, the bound is derived from the
matrix with elements $Q_{mn}$. The portfolio quantities are provided by the
scalars $\lambda _{n}$, and the GNS construction generates an option price
bound $p[\lambda ]$ as a function of these quantities. For this
configuration, the local and global attainment of the bound is expressed in
the following definitions.

\begin{description}
\item[Local attainment:] For each portfolio $\lambda $ there is an
arbitrage-free pricing model $\mathbb{E}_{\lambda }$ that satisfies the
constraints:%
\begin{equation}
\mathbb{E}_{\lambda }[\sqrt{\mathsf{a}_{m}\mathsf{a}_{n}}]=Q_{mn}
\end{equation}%
and has option price given by:%
\begin{equation}
\mathbb{E}_{\lambda }[(\sum_{n}\lambda _{n}\mathsf{a}_{n})^{+}]=p[\lambda ]
\end{equation}

\item[Global attainment:] There is an arbitrage-free pricing model $\mathbb{E%
}$ that satisfies the constraints:%
\begin{equation}
\mathbb{E}[\sqrt{\mathsf{a}_{m}\mathsf{a}_{n}}]=Q_{mn}
\end{equation}%
and for each portfolio $\lambda $ has option price given by:%
\begin{equation}
\mathbb{E}[(\sum_{n}\lambda _{n}\mathsf{a}_{n})^{+}]=p[\lambda ]
\end{equation}
\end{description}

The remainder of this article investigates local and global attainment in
the simple case of two assets. Consider the option to exchange the asset $%
\mathsf{a}$ for $k$ units of the asset $1$. The GNS construction provides an
upper bound for the option price across all pricing models $\mathbb{E}$
constrained to match the price $f$ and root-variance $\nu $:%
\begin{equation}
\mathbb{E}[(\mathsf{a}-k)^{+}]\leq \frac{1}{2}(f-k)+\frac{1}{2}\sqrt{%
(f-k)^{2}+4fk\nu }
\end{equation}%
where:%
\begin{align}
f& =\mathbb{E}[\mathsf{a}] \\
\nu & =\frac{\mathbb{E}[\mathsf{a}]-\mathbb{E}[\sqrt{\mathsf{a}}]^{2}}{%
\mathbb{E}[\mathsf{a}]}  \notag
\end{align}%
This bound is attained by the binomial model, albeit with a configuration
that depends on the strike, and this demonstrates local attainment. The
Carr-Madan replication formula shows that the measure implied by the bound
does not match the required moments -- there is no single measure that
generates the bound for all strikes -- so the bound is not globally attained.

\subsection{Local attainment}

In the binomial model, the asset $\mathsf{a}$ with binomial spectrum $\func{%
spec}[\mathsf{a}]=\{a_{-},a_{+}\}\subset \mathbb{R}_{+}$ has price:%
\begin{equation}
\mathbb{E}[\mathsf{a}]=a_{-}\sin [\chi ]^{2}+a_{+}\cos [\chi ]^{2}
\end{equation}%
where the angle $\chi $ in the range $0\leq \chi \leq \pi /2$ generates
positive weights that sum to one. The calibration problem is transformed
into trigonometry by assigning $\nu =\cos [\theta ]^{2}$ for the angle $%
\theta $ in the range $0\leq \theta \leq \pi /2$. Calibration to the price $%
f $ and root-variance $\nu $ then leads to the constraints:%
\begin{align}
\sqrt{f}\sin [\theta ]& =\sqrt{a_{-}}\sin [\chi ]^{2}+\sqrt{a_{+}}\cos [\chi
]^{2} \\
f& =a_{-}\sin [\chi ]^{2}+a_{+}\cos [\chi ]^{2}  \notag
\end{align}

\begin{figure}[p]
\begin{tabular}{c}
\includegraphics[width=0.95%
\linewidth]{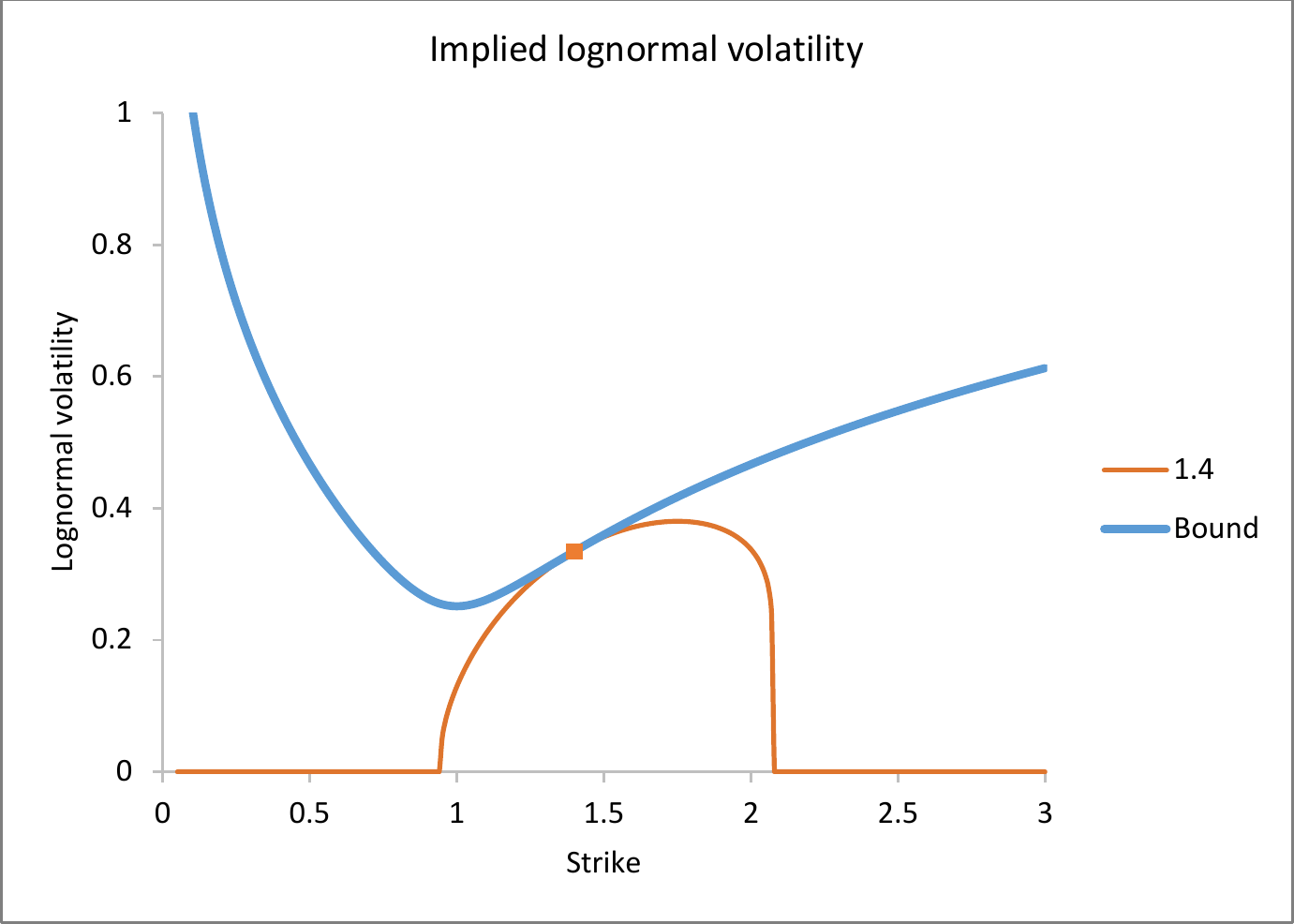}
\\ 
\includegraphics[width=0.95%
\linewidth]{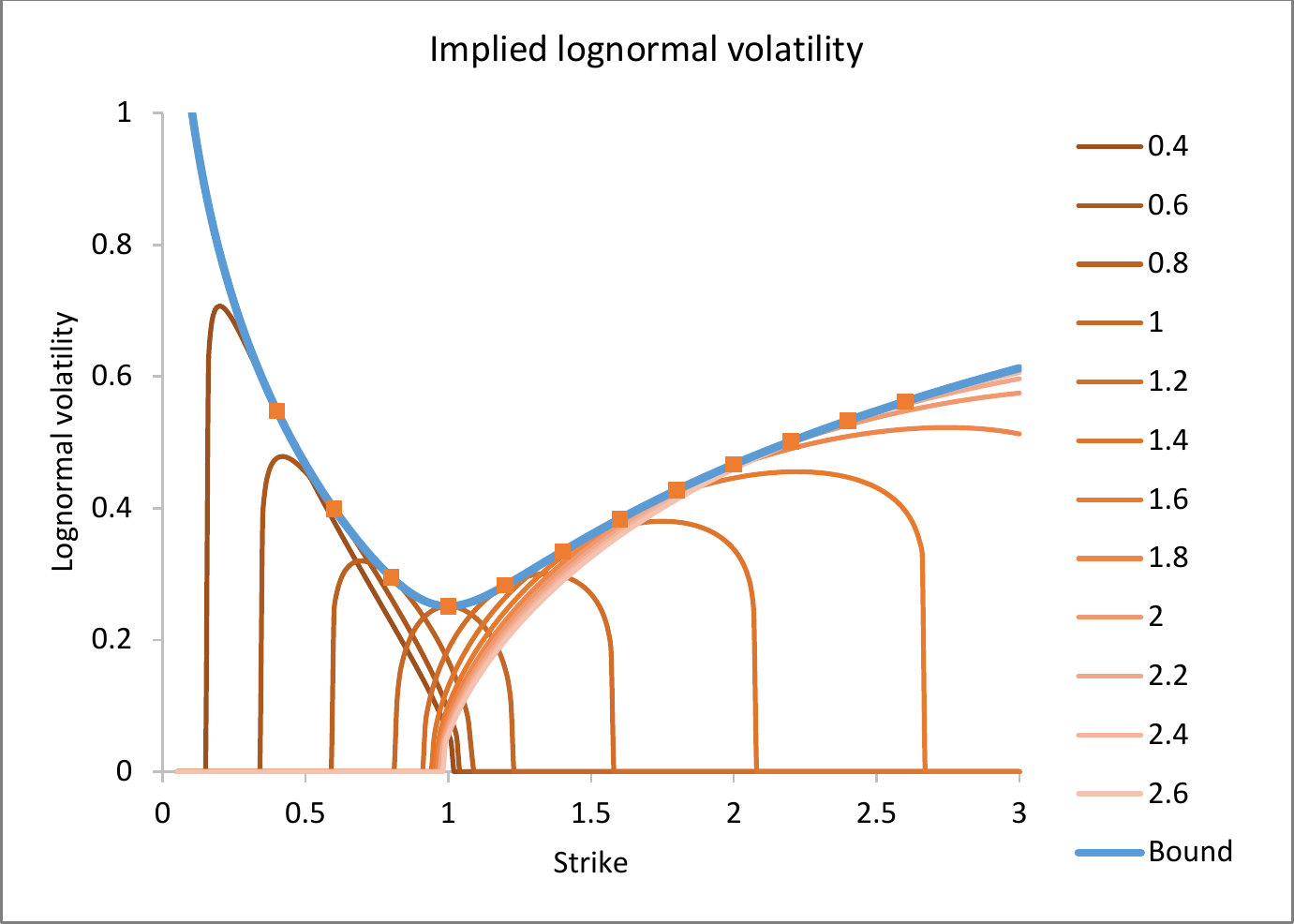}%
\end{tabular}%
\caption{The maximum vanilla option price in the binomial model, compared to
the upper bound. In these graphs, the price of the asset is fixed at 1 and
the root-variance is fixed at 0.01. The first graph shows the implied
lognormal volatility for the binomial model that generates the maximum
option price that can be attained at strike 1.4. The second graph includes
the binomial models generating the maximum option prices that can be
attained at a range of strikes between 0.4 and 2.6. The optimal binomial
model depends on the strike, and the maximum across all these binomial
models matches the upper bound.}
\end{figure}

For a given angle $\chi $, assumed not to be equal to the edge cases $0$ or $%
\pi /2$, these relations can be inverted to identify the asset with the
specified moments. The constraint imposed by calibration to the price is
solved by:%
\begin{align}
a_{-}& =f\frac{\cos [\beta ]^{2}}{\sin [\chi ]^{2}} \\
a_{+}& =f\frac{\sin [\beta ]^{2}}{\cos [\chi ]^{2}}  \notag
\end{align}%
for an angle $\beta $ in the range $0\leq \beta \leq \pi /2$. The constraint
imposed by calibration to the root-variance is then solved for the angle $%
\beta $:%
\begin{equation}
\sin [\theta ]=\sin [\chi +\beta ]
\end{equation}%
There are two solutions to this equation. The first solution $\beta =\theta
-\chi $ is valid for angle $\chi $ in the range $0<\chi \leq \theta $,
leading to the following spectrum for the asset:%
\begin{align}
a_{-}& =f\frac{\cos [\theta -\chi ]^{2}}{\sin [\chi ]^{2}} \\
a_{+}& =f\frac{\sin [\theta -\chi ]^{2}}{\cos [\chi ]^{2}}  \notag
\end{align}%
This solution satisfies $a_{-}\geq a_{+}$. The second solution $\beta =\pi
-\theta -\chi $ is valid for angle $\chi $ in the range $\pi /2-\theta \leq
\chi <\pi /2$, leading to the following spectrum for the asset:%
\begin{align}
a_{-}& =f\frac{\cos [\theta +\chi ]^{2}}{\sin [\chi ]^{2}} \\
a_{+}& =f\frac{\sin [\theta +\chi ]^{2}}{\cos [\chi ]^{2}}  \notag
\end{align}%
This solution satisfies $a_{-}\leq a_{+}$. The two solutions transform into
each other under the transformation $\chi \mapsto \pi /2-\chi $ that
switches the underlying states.

Focussing, without loss of generality, on the second solution, the option
price is maximised at the angle $\chi $ satisfying:%
\begin{equation}
\tan [2\chi ]=-\frac{f\sin [2\theta ]}{f\cos [2\theta ]+k}
\end{equation}%
At this angle, the price of the option is given by the supremum price:%
\begin{equation}
\mathbb{E}[(\mathsf{a}-k)^+]=\frac{1}{2}(f-k)+\frac{1}{2}\sqrt{%
(f-k)^{2}+4fk\nu }
\end{equation}%
The binomial model at this angle generates the supremum option price for
pricing models that calibrate to the asset price and root-variance. This is
not entirely surprising, as the supremum problem is essentially a linear
programming problem, and with two constraints the solution reduces to a
domain comprised of just two states. Note, however, that the angle that
specifies the optimal binomial model depends on the strike. There is no
single binomial model that achieves the bound for all strikes.

\subsection{Global attainment}

The bound for the option price is decreasing and convex as a function of the
strike, and so represents a pricing measure that is free of arbitrage. For
any individual strike, the bound provides the maximum possible option price
from pricing models matching the asset price and root-variance. This does
not imply that the bound itself defines a pricing model that matches the
asset price and root-variance. Application of the Carr-Madan replication
formula demonstrates that the implied measure has root-variance that exceeds
the calibration constraint.

\begin{figure}[p]
\begin{tabular}{c}
\includegraphics[width=0.95%
\linewidth]{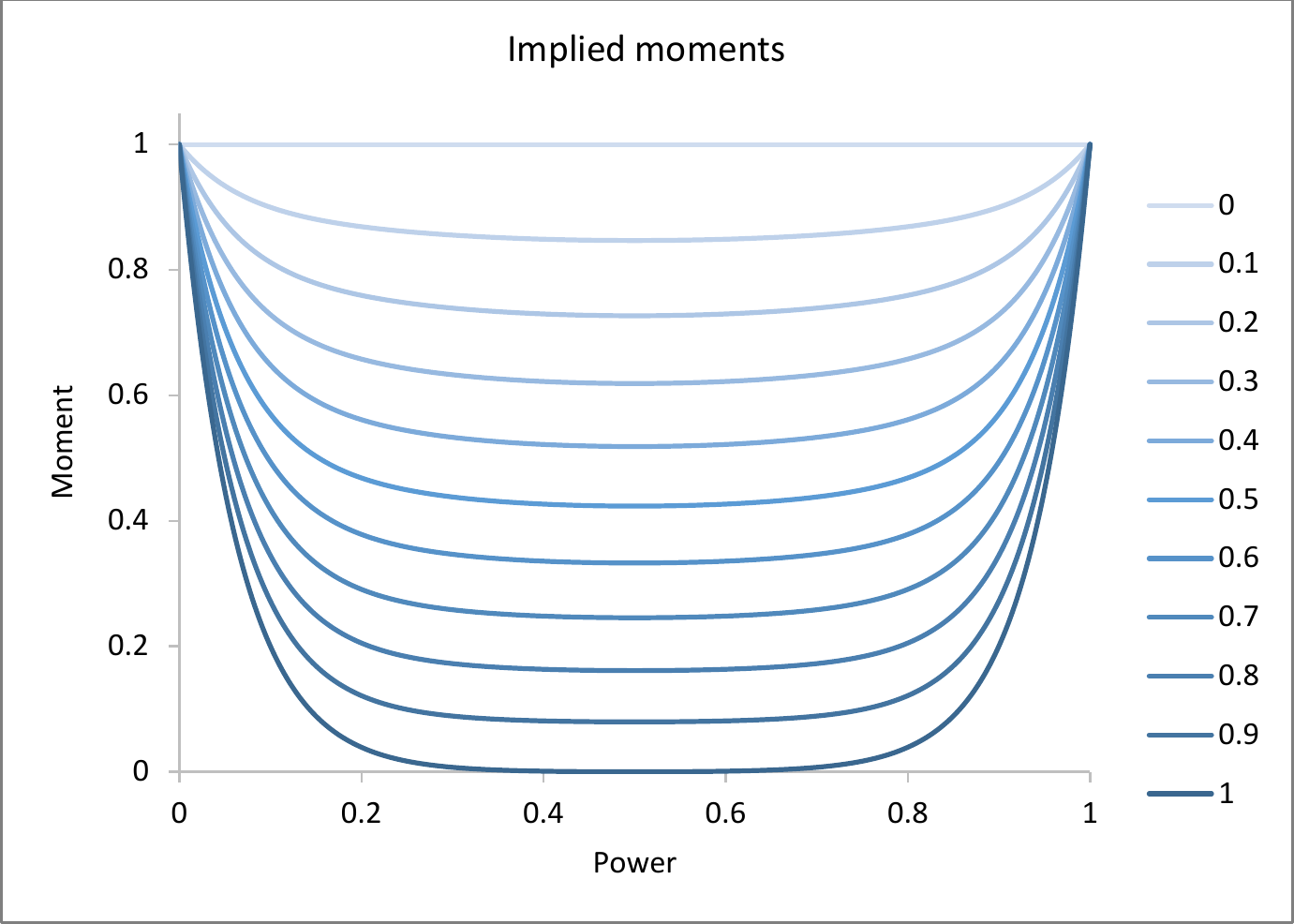}
\\ 
\includegraphics[width=0.95%
\linewidth]{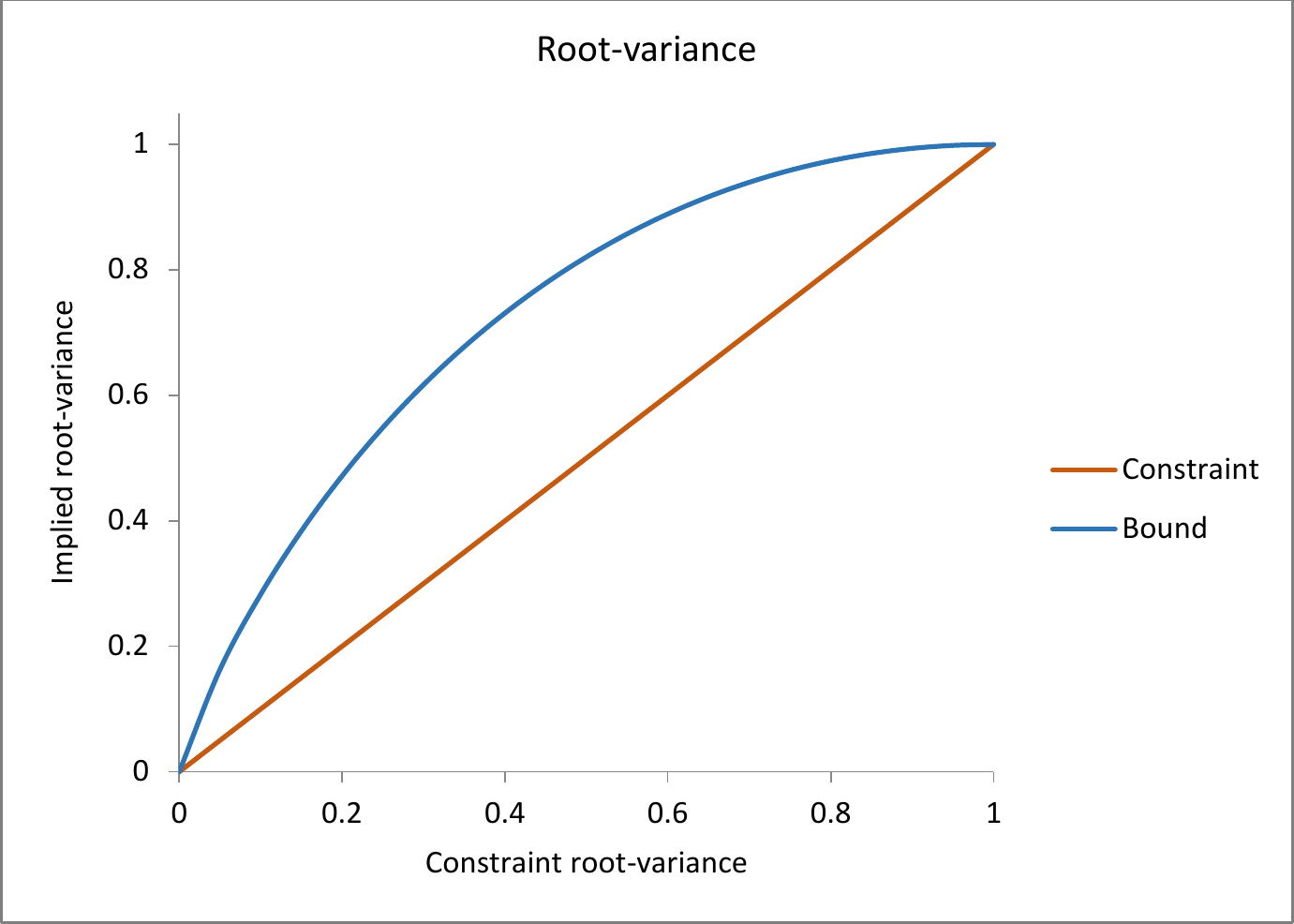}%
\end{tabular}%
\caption{The moments and root-variance implied by the option price bound as
a function of the constraint for the root-variance. The first graph shows
the moments for a range of values for the constraint between 0 and 1. The
second graph then compares the implied root-variance with the constraint
root-variance. The moment is computed using the Carr-Madan replication
formula. Except for the boundary points, the implied root-variance is always
strictly higher than the constraint root-variance.}
\end{figure}

Consider the pricing model $\mathbb{E}$ with call and put option prices
given by:%
\begin{align}
\mathbb{E}[(\mathsf{a}-k)^{+}]& =\frac{1}{2}(f-k)+\frac{1}{2}\sqrt{%
(f-k)^{2}+4fk\nu } \\
\mathbb{E}[(k-\mathsf{a})^{+}]& =\frac{1}{2}(k-f)+\frac{1}{2}\sqrt{%
(k-f)^{2}+4kf\nu }  \notag
\end{align}%
Subtracting these expressions, it immediately follows that the measure is
calibrated to the price $f$:%
\begin{equation}
\mathbb{E}[\mathsf{a}]=f
\end{equation}%
The Carr-Madan replication formula determines the price $\mathbb{E}[\phi
\lbrack \mathsf{a}]]$ of the payoff $\phi \lbrack \mathsf{a}]$ to be:%
\begin{align}
\mathbb{E}[\phi \lbrack \mathsf{a}]]=\phi \lbrack f]& +\frac{1}{2}%
\int_{k=0}^{f}\phi ^{\prime \prime }[k]((k-f)+\sqrt{(k-f)^{2}+4kf\nu })\,dk
\\
& +\frac{1}{2}\int_{k=f}^{\infty }\phi ^{\prime \prime }[k]((f-k)+\sqrt{%
(f-k)^{2}+4fk\nu })\,dk  \notag
\end{align}%
The first integral is simplified with the change of variables $x=\sqrt{k/f}$
and the second integral is simplified with the change of variables $x=\sqrt{%
f/k}$, leading to:%
\begin{align}
\mathbb{E}[\phi \lbrack \mathsf{a}]]=\phi \lbrack f]+f^{2}\int_{x=0}^{1}& 
\frac{1}{x^{2}}(x^{3}\phi ^{\prime \prime }[fx^{2}]+x^{-3}\phi ^{\prime
\prime }[fx^{-2}]) \\
& \times (\sqrt{(1-x^{2})^{2}+4x^{2}\nu }-(1-x^{2}))\,dx  \notag
\end{align}

The moment $\mathbb{E}[\mathsf{a}^{n}]$ is finite only for $0\leq n\leq 1$.
The integer moments are $\mathbb{E}[1]=1$ and $\mathbb{E}[\mathsf{a}]=f$,
and for $0<n<1$ the moment is given by:%
\begin{align}
\frac{\mathbb{E}[\mathsf{a}^{n}]}{f^{n}}=1+n(n-1)\int_{x=0}^{1}& \frac{1}{%
x^{2}}(x^{2n-1}+x^{-(2n-1)}) \\
& \times (\sqrt{(1-x^{2})^{2}+4x^{2}\nu }-(1-x^{2}))\,dx  \notag
\end{align}%
The moment is symmetric under the transformation $n\mapsto 1-n$. The case $%
n=1/2$ corresponds to the fixed point of this transformation:%
\begin{equation}
\frac{\mathbb{E}[\sqrt{\mathsf{a}}]}{\sqrt{f}}=1-\frac{1}{2}\int_{x=0}^{1}%
\frac{1}{x^{2}}(\sqrt{(1-x^{2})^{2}+4x^{2}\nu }-(1-x^{2}))\,dx
\end{equation}%
This expression is numerically integrated to generate the root-variance
implied by the option price bounds. The implied root-variance exceeds the
constraint everywhere except at the edge cases, demonstrating that the bound
is not globally attained.

\section{Conclusion}

By exploring the exercise strategies that are available in the larger
algebra of all operators on the Hilbert space in the GNS construction, the
approach developed here generates bounds for option pricing contingent only
on partial information from the pricing measure. In some cases this is a
tight bound for the option price, being attained by the multinomial model
calibrated to the same target moments, and can be arbitrarily refined by
extracting more information. The family of bounds generated by this approach
depends on the partition of the option portfolio, and with ingenuity leads
to methods for interpolating the volatility smile, linking swaption and
caplet prices, and many other financial applications. Intriguingly, the
volatility smiles implied by these bounds are similar to smiles observed in
the market.

These results accommodate an extension to the classical theory of
mathematical finance that, by admitting noncommuting assets, is amenable to
the methods of quantum analysis. At opposing extremes in this picture are
the classical algebra of left-multiplication operators and the quantum
algebra of all operators on the Hilbert space. There are many layers of
algebra between these extremes, each of which determines a domain for the
exercise strategies, thereby creating a hierarchy of option pricing bounds.
This suggests a relationship between the theory of von Neumann algebras and
the pricing of options that is worthy of further investigation.

\newpage

\end{document}